\documentclass[11pt]{article}
\usepackage{hyperref}
\usepackage{amsfonts}
\usepackage[english]{babel}
\usepackage{a4wide,times}

\usepackage[latin1]{inputenc}
\usepackage[T1]{fontenc}
\usepackage{amssymb}
\usepackage{rotating, graphicx}
\usepackage{epsfig}
\usepackage{amsbsy}
\usepackage{verbatim}
\usepackage{color}
\usepackage{mathrsfs}
\usepackage{makeidx}
\usepackage{amsmath}
\usepackage{amsthm}
\usepackage{diagbox}
\usepackage{dsfont}	
\usepackage{float}
\usepackage{multirow}
\usepackage{subfigure}
 \textwidth 15cm
 \topmargin -2cm
 \textheight 23.0cm
 \parskip 1mm

\setlength{\topmargin}{2mm} 
%\addtolength{\oddsidemargin}{6mm}
%\addtolength{\evensidemargin}{6mm}
 \addtolength{\textwidth}{-8mm}
\addtolength{\textheight}{-8mm}

% THEOREMES ---------------------------------------------------------------
\theoremstyle{plain}
\newtheorem{thm}{Theorem}[section]

\newtheorem{prop}[thm]{Proposition}

\newcommand{\argmin}{\arg\!\min}

\theoremstyle{definition}

\usepackage{dsfont}

\theoremstyle{remark}
\newtheorem{rem}{\bf Remark}[section]
\theoremstyle{remark}

\newtheorem{com*}{\bf Comment}
%\theoremstyle{example}
%\newtheorem{example}{\bf Example}

% FANCYS ---------------------------------------------------------------
\usepackage{fancyhdr}

\DeclareMathOperator{\cF}{\mathcal{F}}
\DeclareMathOperator{\FF}{\mathbb{F}}
\makeatletter
\def \newequation#1#2{
   \@definecounter{#1}
   \@namedef{the#1}{\hbox{#2}}
   \@namedef{#1}{$$\refstepcounter{#1}}
   \@namedef{end#1}{
      \eqno \csname the#1\endcsname $$\global\@ignoretrue
      }
}
\makeatother
\newequation{E1}{($E_{b,\sigma,W}$)}
   
\makeatletter
\def \newequation#1#2{
   \@definecounter{#1}
   \@namedef{the#1}{\hbox{#2}}
   \@namedef{#1}{$$\refstepcounter{#1}}
   \@namedef{end#1}{
      \eqno \csname the#1\endcsname $$\global\@ignoretrue
      }
   }
\makeatother
\newequation{hyp3}{($\mathcal{H}_{b,\sigma}$)}

\makeatletter
\def \newequation#1#2{
   \@definecounter{#1}
   \@namedef{the#1}{\hbox{#2}}
   \@namedef{#1}{$$\refstepcounter{#1}}
   \@namedef{end#1}{
      \eqno \csname the#1\endcsname $$\global\@ignoretrue
      }
   }
\makeatother
\newequation{shleg}{(ShLeg)}

\makeatletter
\def \newequation#1#2{
   \@definecounter{#1}
   \@namedef{the#1}{\hbox{#2}}
   \@namedef{#1}{$$\refstepcounter{#1}}
   \@namedef{end#1}{
      \eqno \csname the#1\endcsname $$\global\@ignoretrue
      }
   }
\makeatother
\newequation{kl}{(KL)}

\makeatletter
\def \newequation#1#2{
   \@definecounter{#1}
   \@namedef{the#1}{\hbox{#2}}
   \@namedef{#1}{$$\refstepcounter{#1}}
   \@namedef{end#1}{
      \eqno \csname the#1\endcsname $$\global\@ignoretrue
      }
   }
\makeatother
\newequation{haar}{(Haar)}

\addtolength{\oddsidemargin}{0cm}
 \addtolength{\evensidemargin}{0cm}
 \addtolength{\textwidth}{1cm}
 \addtolength{\textheight}{1.5cm}
 \addtolength{\voffset}{-1.5cm}

\title{Conditional survival probabilities under partial information:\\ a recursive quantization approach with applications}
\author{  
  { Cheikh MBAYE} \thanks{Louvain Finance Center, UCLouvain, Voie du Roman Pays 34, 1348 Louvain-la-Neuve, Belgium, e-mail: {\tt cheikh.mbaye@uclouvain.be}. The research of C. Mbaye is funded by the National Bank of Belgium and an FSR grant.} \ \ \ {   \quad   Abass SAGNA }  \thanks{F\'ed\'eration de Math\'ematiques d'Evry,  Laboratoire Analyse et  Probabilit\'es, 23 Boulevard de France, 91037 Evry,  \& ENSIIE, e-mail: {\tt abass.sagna@ensiie.fr}.  This research  benefited from the support of the `` Chaire March\'es en Mutation'', F\'ed\'eration Bancaire Fran\c{c}aise.} \ \ \ \quad   { Fr\'ed\'eric VRINS} \thanks{Louvain Finance Center, UCLouvain, Voie du Roman Pays 34, 1348 Louvain-la-Neuve, Belgium, e-mail: {\tt frederic.vrins@uclouvain.be}.} 
}

\date{}%\date{\today}

\begin{document}

\maketitle 

\begin{abstract}
We consider a structural model where the survival/default state is observed together with a noisy version of the firm value process. This assumption makes the model more realistic than most of the existing alternatives, but triggers important challenges related to the computation of conditional default probabilities. In order to deal with general diffusions as firm value process, we derive a numerical procedure based on the recursive quantization method to approximate  it. Then, we investigate the error approximation induced by our procedure. Eventually, numerical tests are performed to evaluate the performance of the method, and an application is proposed to the pricing of CDS options.
 % More precisely, the model dynamics in the present paper is more general and the investors' filtration is larger, since it includes now the default information. Moreover, the approximating proposed numerical algorithm is now faster and the numerical error is shown to be lower, since there is no need to pass through a Monte-Carlo simulation).
%\textcolor{blue}{Je ne suis pas très fan d'avoir des références dans les abstracts. Il me semble qu'un abstract doit pouvoir etre lu indépendemment du reste du texte, et donc des références. De plus, ce qui est particulièrement délicat ici car Abass intervient dans les 3 références, et cela pourrait donner l'impression que l'on s'auto-cite, et que ces travaux n'intéressent que nous. Je serais prudent par rapport à cela, et donc plus discret.} \textcolor{red}{Tout \`a fait. Je dois reprendre tout l'abstract et l'introduction.}
\end{abstract}
\medskip

\textit{Keywords:} default model, structural model, noisy information, non-linear filtering, credit risk.

\section{Introduction}
In the recent decades, credit risk received an increasing attention from academics and practitioners. %This has been reinforced by the 2008 post-crisis regulation asking financial institutions to meet a variety of regulatory capital requirements. %including credit risk management.
%Consequently, it has been generated a demand for successfully estimating the 
In particular, the 2008 financial crisis shed the light on the importance of having sound credit risk models to better asses the default likelihood of firms and counterparties. The structural approach is one of the two most popular frameworks. It is originated to the seminal work of Merton \cite{Mer} and uses the dynamics of structural variables of a firm, such as asset and debt, to determine whether the firm defaulted before a given maturity. %Then, debt and credit spreads can be simply derived using the classical option pricing theory. 
%However, Merton's model does not allow for premature default in the sense that the default may only occurs at the maturity of the claim. 
To better deal with the actual timing of the default event, first passage time models were then introduced. Among them is the celebrated Black and Cox model \cite{BlaCox} which adds a time-dependent barrier, among others. % and further takes into account the so-called safety covenants allowing the bondholders with the right to force the firm to bankruptcy if it is doing poorly. 
Yet, the Black and Cox model has few parameters and is not easily calibrated to structural data such as CDS quotes along different maturities. To that end, extensions of the same models called AT1P and SBTV were introduced in \cite{BrigTar04} and \cite{BrigTar05} allowing exact calibration to credit spreads using efficient closed-form formulas for default probabilities. %Hence the market value of the firm is a key input of these structural models.
\medskip

In practice however, it is difficult for investors to perfectly assess the value of the firm's assets. %, a phenomenon commonly referred to as incomplete information. 
In this case, modeling the firm value in a Black-Cox framework is problematic, since the model assumes that the firm's underlying assets are observable. Moreover, in such a framework, the default time is predictable, leading to vanishing credit spreads for short maturities. In order to address these drawbacks, Duffie and Lando \cite{DufLan} proposed a model where the investors have only partial information on the firm value and observe at discrete time intervals a noisy accounting report. The default time becomes totally inaccessible in the market filtration. %and admits an intensity that is proportional to the derivative of the conditional density of the unobserved firm value at the default barrier. 
As a result, the corresponding short term spreads are always higher compared to the complete information short term spreads. %Furthermore, Duffie and Lando provides an interesting link between structural and reduced-form models. 
Alternatively, some extensions of this model based on noisy information in continuous time can be found among others in \cite{CocGemJea} or recently in \cite{Fre19}.\medskip %In particular, in \cite{CocGemJea}, the conditional default probabilities are computed under the assumption that the fundamental process which triggers the default is a continuous and invertible function of a Gaussian martingale.\medskip
%\item \textcolor{blue}{A possible extension would be to consider another type of "uncertainty". Noisy observation is a first way to go, a second could be to add some delay with/without observation noise. To that end, we could use the lazy clock idea introduced in our PUQR paper (Profeta \& Vrins). By doing so we would be more general in terms of ``partial information'': not only noisy, but also lagged. Simulating a lazy clock (in particular a Poisson lazy clock, PLC) is easy. But I don't know what the outcome will be in this context. Also, the synchronization rate is governed by a parameter (the Poisson intensity in the case of a PLC) which estimation is unclear. To check if we want to include that here or instead keep this idea for a potential next paper. }

In this paper, we both generalize and improve results derived in earlier studies. The models presented in \cite{CalSag} and \cite{ProSag14} can deal with arbitrary firm-value diffusions, but are heavy. Moreover, the considered information flow is only made of a noisy version of the firm-value. This is not realistic as in practice, investors can obviously observe the default state of the firm. This larger filtration is considered in \cite{CocGemJea}, but under the restrictive assumption that the firm-value process is a continuous and invertible function of a Gaussian martingale. In this work, we consider the same information set as \cite{CocGemJea} but relax the restriction regarding the firm-value dynamics. %, in the sense that the investor's information is the natural filtration of an observable but noisy firm-value process, enlarged with the natural filtration of the default indicator. This model is more realistic that \cite{CalSag} and \cite{ProSag14} where the default event cannot be observed. 
%First, the model dynamics considered here are more realistic than in earlier works in that the investors' filtration includes not only a noisy observation of the firm value process, but also the firm's default indicator, which impact is proven to be both significant and intuitive. In particular, as proven in \cite{CocGemJea}, the so-called $\mathcal{H}$-hypothesis holds in our case easing the pricing method of defaultable claims. 
To deal with this general case, we propose a numerical scheme based on fast quantization recently introduced in \cite{PagSagMQ}. This technique is faster compared to \cite{CalSag} and \cite{ProSag14} as there is no need to rely on Monte-Carlo simulations to compute the conditional survival probabilities. %Our numerical method is based on fast quantization recently introduced in \cite{PagSagMQ} and allows for a computation of conditional default probabilities for general diffusion of the unobserved process without further assumptions similar to \cite{CocGemJea}. 
A detailed analysis of the error induced by the approximation is provided. Eventually, we illustrated our method on the pricing of CDS option credit derivatives.\medskip

The reminder of the paper is organized as follows. In Section \ref{sec:mdel}, we introduce the model. Different information flows and corresponding survival probabilities will be discussed. Section \ref{sec:Quanti} presents the estimations of the survival probabilities using the recursive quantization method and the stochastic filtering theory. We then give a brief introduction to the quantization method before deriving the error analysis pf these estimations. Section \ref{sec:numerics} is devoted to the results of the numerical experiments.

\section{The model}\label{sec:mdel}
Assume a probability space $(\Omega, \mathcal F,\mathbb P)$, modeling the uncertainty of our economy.  We consider a structural default model, and represent the default time $\tau_X$ of a reference entity as the first passage time of firm value process,below a default threshold. More precisely, we consider a Black and Cox setup \cite{BlaCox}, where the stochastic process $X$ represents the actual value of the firm and $\bf{a} \in \mathbb R$ stands for  the default barrier. Assuming $\tau_X>0$, we have:
%We define  the default time, slightly generalizing the model in 
%\cite{CalSag}, %, as
\begin{equation}\label{eq:defTau}
\tau_{X} := \inf \left\{u \geq 0 : X_u \le {\bf{a}} \right\},~~0< {\bf{a}} < X_0 
\end{equation}
where $\inf \emptyset := + \infty$, as usual. We restrict ourselves to consider $0\leq t\leq T$ where $T$ is a finite time horizon.  %\textcolor{red}{a-t-on besoin du $X$ dans $\tau_X$ ?}

We consider a partial information model where the true firm value $X$ (called \emph{signal process} hereafter) is not observable and we only observe $Y$ (\emph{observation process}), which is correlated with $X$. We suppose that    the dynamics of $X$ and $Y$ are governed by the following stochastic differential equations (SDEs) :
\begin{equation} \label{EqSignalStatePr}
 \begin{cases}
 dX_t = b(t,X_t) dt + \sigma(t,X_t) dW_t,  &   X_0=x_0, \\
 dY_t = h(t,Y_t,X_t) dt + \nu(t,Y_t) dW_t +  \delta(t,Y_t) d\widetilde{W}_t, &  Y_0=y_0,
 \end{cases} 
 \end{equation}
\noindent
where $(W,\widetilde W$) is a standard two-dimensional Brownian motion. We suppose that the functions $b, \sigma,\nu,\delta:[0,+ \infty) \times \mathbb R \rightarrow \mathbb R$ are Lipschitz in $x$ uniformly in $t$ and  that $\sigma(t,x)>0$ for every $(t,x) \in [0, + \infty) \times \mathbb{R} $ . These conditions ensure that the above SDEs admit a unique strong solution. Moreover we assume that $h$ is locally bounded and Lipschitz in $(y,x)$, uniformly in $t$ and that  $\nu(t,y)>0$ and  $\sigma(t,y)>0$ for every $(t,y) \in [0, + \infty) \times \mathbb{R}$.

%\textcolor{blue}{In this setup,  the following relationships hold:
%$$\FF^H\subsetneq \FF^X=\FF^W \subsetneq \mathbb G~~~~ \text{ and }~~~~ \FF^Y \subsetneq \mathbb G\;.$$
%}

\subsection{Information flows}
One of the major critiques of such models is that in practice, the firm value is not observable. It is therefore not realistic to consider that the information available to the investor is $\FF^X:=(\cF^X_t)_{t\geq 0}$,  $\cF^X_t:=\sigma(X_u,0\leq u\leq t)$. A more realistic framework has been proposed in \cite{CalSag} where the investor information is given by the natural filtration $\FF^Y$ of  a noisy version $Y$ of the process $X$. However, one might argue that this way of modeling the information is not realistic either since given $\cF^Y_t$, the investor is unable to know whether the reference entity defaulted or not by time $t$. In other words, the default indicator process $H=(H_t)_{t\geq 0}$, $H_t := \mathds{1}_{ \{t \ge \tau_X  \}}, t \ge 0$, is not adapted to $\FF^Y$.\\

In this paper, we address this point by considering a more realistic information flow, defined as the progressive enlargement of $\FF^Y$ with $\FF^H$, the natural filtration of the default indicator process, $$
\cF_t^H:= \sigma (H_u, 0 \le u \le t), \quad t \ge 0.
$$
In other words, we have two investors' information flows.  %Clearly, $\FF^H\subsetneq \FF^X$.
On the one hand we have the information available to the common investor, defined as the progressive enlargement of the natural filtration of the default indicator process with that of the noisy firm-value, noted $\FF=(\cF_t)_{t\geq 0}$ where
$$
\cF_t := \cF_t^Y \vee \cF_t^H, \quad t \ge 0\;.
$$
In this setup,  the following relationships hold:
$$\FF^H\subsetneq \FF^X=\FF^W \subsetneq \mathbb G~~~~ \text{ and }~~~~ \FF^Y \subsetneq \mathbb G\;.$$
where $\mathbb G:=(\mathcal{G}_t)_{t\geq 0}$  is the full information, i.e., the information available for example to a small number of stock holders of the company, who have access to $Y$ and $X$.\medskip

 On the other hand, the natural filtration of the actual (i.e. noise-free) firm-value process, $\FF^X$, could be seen as the information available to insiders. Economically, $X$ would represent the value of the firm, which is unobservable to the common investors, while $Y$ might be the market price of an asset issued by the firm, accessible to all market participants, and $\bf{a}$ would stand for the solvency capital requirement imposed by regulators. %(that, indeed, varies over time in a deterministic way).

%MORE ON ECONOMIC INTERPRETATION HERE?\\
%Suppose now that a finite time horizon $T$ is fixed. 
%For a given $s$, $0\leq s < T$,  we observe the process $Y$  from $0$ to $s$.
%At time $s$ if  the firm has already defaulted we do nothing, otherwise, 
%we invest in derivatives issued by the firm. 
%We define  the default time, slightly generalizing the model in \cite{CalSag}, as
%\begin{equation}\label{eq:defTau}
%\tau_{X} := \inf \left\{u \geq 0 : X_u \le {\bf{a}} \right\},
%\end{equation}
%for $0< {\bf{a}} < x_0$ and  where as usual $\inf \emptyset = + \infty$. %(notice that in numerical examples we will consider models where $V_t \in (0,+\infty)$, eventually by stopping the process $V$ at the default time $\tau$ by considering the process ${(V_{t \land \tau})}_t$).
\begin{rem}
The results in the paper can be straightforwardly extended in the case when the default barrier ${\bf{a}}$ is a piecewise constant function of time ${\bf{a}}: [0,\infty) \rightarrow [0, \infty)$, with $0 < {\bf{a}}(0) < x_0$. %In all the other situations, more technical results have to be applied and we are not going to focus on these aspects.
The other extensions are beyong the scope of this paper. Nevertheless, let us mention that in, e.g., \cite{PotzWang} crossing probabilities for the Brownian motion are obtained in the case when the (double) barrier is a piecewise linear function on $[0,T]$ and approximations for crossing probabilities are obtained for general nonlinear bounds. 
\end{rem}
%Before stating the problem, let us introduce the pure-jump stochastic process $H_t := \mathds{1}_{ \{t \ge \tau_X  \}}, t \ge 0,$ and its natural filtration 
%$$
%\mathcal F_t^H:= \sigma (H_u, 0 \le u \le t), \quad t \ge 0.
%$$
%The secondary market investors' filtration is defined as:
%$$
%\mathcal F_t := \mathcal F_t^Y \vee \mathcal F_t^H, \quad t \ge 0
%$$
%i.e., the one generated by the observation process $Y$ and by the default indicator process $H$. 

%Clearly %
%$$
%\mathcal F_t \subsetneq \mathcal G_t, \quad \forall \; t \ge 0,
%$$\textcolor{blue}{Est-ce important ? on n'a pas besoin de G je pense, car cela ne correspond a aucun des deux types d'information... SOit on observe $X$, et alors $\cF^X=\cF^X\vee \cF^H$ car la barriere est deterministe, soit $\cF$}\\

%\subsection{Firm-value diffusions}

%To proceed, let us assume that the dynamics of our firm value-based processes $X$ and $Y$ are governed by the following stochastic differential equations (SDEs) :
%\begin{equation} \label{EqSignalStatePr}
 %\begin{cases}
 %dX_t = b(t,X_t) dt + \sigma(t,X_t) dW_t,  &   X_0=x_0, \\
% dY_t = h(t,Y_t,X_t) dt + \nu(t,Y_t) dW_t +  \delta(t,Y_t) d\widetilde{W}_t, &  %Y_0=y_0,
% \end{cases} 
% \end{equation}
\noindent
%where $(W,\widetilde W$) is a standard two-dimensional Brownian motion. We suppose that the functions $b, \sigma,\nu,\delta:[0,+ \infty) \times \mathbb R \rightarrow \mathbb R$ are Lipschitz in $x$ uniformly in $t$ and  that $\sigma(t,x)>0$ for every $(t,x) \in [0, + \infty) \times \mathbb{R} $ \textcolor{blue}{. These conditions ensure that the above SDEs admit a unique strong solution}. Moreover we assume that $h$ is locally bounded and Lipschitz in $(y,x)$, uniformly in $t$ and that  $\nu(t,y)>0$ and  $\sigma(t,y)>0$ for every $(t,y) \in [0, + \infty) \times \mathbb{R}$.

%\textcolor{blue}{In this setup,  the following relationships hold:
%$$\FF^H\subsetneq \FF^X=\FF^W \subsetneq \mathbb G~~~~ \text{ and }~~~~ \FF^Y %\subsetneq \mathbb G\;.$$
%}

%\textcolor{red}{We will deal with different filtrations (i.e., levels of information), satisfying the usual hypotheses.  First of all we define 
%: the secondary market investors' filtration, ge\-ne\-ra\-ted by the observation process $Y$
%$$
%\mathcal F_t^Y := \sigma ( Y_s, 0 \le s \le t ), \quad t \ge 0
%$$
%the full information (global) filtration ${(\mathcal G_t)}_{t \ge 0}$, i.e., the information available for example to a small number of stock holders of the company, who have access to both $X$ and $Y$. 
%Notice that the global information filtration is the one generated by the \textcolor{red}{two-dimensional Brownian motion} $(W, \widetilde W)$.}\\
%%%%%
\subsection{Survival probability}
%\textcolor{red}{Slightly generalizing (better explain in which sense are we considering a more general model? Larger filtration, etc ...) both the targets in \cite{CalSag} (compare the following equation with Equation (2.4) in \cite{CalSag}) and in \cite{ProSag14} [J'ai du mettre $\tau_X > s$: voir preuve Prop suivante]}
A fundamental output of a credit model is the survival probability of the firm up to time $t$ conditional upon $\cF_s$, $s\leq t\leq T$: %, for given $s$ and $t$, with $ s \le t <T$,
\begin{equation}\label{eq:survivalProb}
\mathbb P\left( \tau_X > t  \vert \mathcal F_s\right)= \mathbb E \left( \mathds{1}_{ \{\tau_X > t \}} \Big\vert \mathcal F_s\right)%= \mathbb E \left( \mathds{1}_{ \{\tau_X \ge t \}} \Big\vert \mathcal F_s^{Y} \vee \mathcal F_s^H \right) 
% =  \mathds{1}_{ \{ \tau \ge s  \}} \mathbb P \left(\inf_{0 \le u \le t} X_u > a_u \Big\vert \mathcal F_s^{Y} \vee \mathcal F_s^H \right),
\end{equation}
%that is a conditional survival probability of the process $X$ (the firm) up to time $t$, given the collected information up to time $s$. 
%Of course, the computation of the above survival probability makes sense only on the set $\{ \tau_X > s \}$.
%je en comprends pas la raison de cette phrase. Tout est bien défini puisque $H$ est $\cF$-adapté. si $\tau_X<s$, alors la proba vaut 0, c'est tout. Par contre, on pourrait ecrire: 
Note that this probability collapse to zero whenever $\{ \tau_X \leq s \}$.  Recall that the specificity of our approach is that, the actual value of the firm $X$ is not revealed in $\mathcal{F}_s$; only a noisy version $Y$ is accessible. 
%\textcolor{green}{je suppose qu'on voudrait juste dire que c'est 0 dans la cas contraire, comme tu vient de dire.}\\
%%%%%%

 Using the Markov property of $X$, the fact that the two Brownian motions are independent and the chain rule of the conditional expectation, we show  the following result.\\
%\textcolor{red}{DOES IT MAKE SENSE TO INTRODUCE $F$?} \textcolor{green}{I think YES.}\textcolor{blue}{ce n'est pas utilisé dans la preuve, donc cela ne me semble pas approprie. Je suggere de juste donner l'expression liee qu Key lemma. Cela permet de montrer par exemple l'expression du type (14), que je donnerais ici en corollaire. Ensuite, pour la partie quantif, je ferais appel a [14] pour dire que le ratio de proba peut s'ecrire sous le forme actuelle qui apparait dans l'enonce de 2.1... Ca a l'avantage de separer les resultats importants pour le modele, des resultats importants pour le calcul.} \textcolor{green}{Faudra se referer au papier cite pour la fin de la preuve}
 \begin{prop}  We have, for $s \le t$,
 \begin{equation}  \label{eq:start}
 %\mathbb E \left( \mathds{1}_{ \{\tau_X \ge t \}} \Big\vert \mathcal F_s \right)
 \mathbb P\left( \tau_X > t  \Big\vert \mathcal F_s\right)
 %= \mathbb E \left( \mathds{1}_{ \{\tau_X \ge t \}} \Big\vert \mathcal F_s^{Y} \vee \mathcal F_s^H \right) 
 = \mathds{1}_{ \{ \tau_X > s  \}} \frac{ \mathbb E \left[ \mathds{1}_{ \{ \tau_X > s  \}} F(s,t,X_s) \vert \mathcal F_s^{Y} \right] }{\mathbb P \left(  \tau_X > s    \Big\vert \mathcal F_s^{Y} \right)}
 \end{equation}
 where, for every $x \in \mathbb{R}$,
\begin{equation} \label{DefinitionDeF}
 F(s,t,x) := \mathbb P \left( \inf_{s < u \le t} X_u >{\bf{a}} \Big\vert X_s =x \right)
 \end{equation}
is the  conditional survival probability under full information. Furthermore, it holds on the set  $\{ \tau_X>s\}$ that 
\begin{equation}\label{eq:defaultProbComp}
\mathbb P\big(\tau_{ X} >  t \vert {\cal F}_s^{ Y}\big)  \le \mathbb P\big(\tau_{ X} > t \vert { \cF}_s\big).
\end{equation}
\end{prop}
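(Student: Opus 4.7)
The plan is to treat the two assertions separately, using the standard progressive enlargement toolkit for equation \eqref{eq:start} and then deducing \eqref{eq:defaultProbComp} as a simple corollary.

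For \eqref{eq:start}, I would first invoke the classical key lemma of progressive enlargement of a filtration by a random time: for any integrable $\mathcal F_\infty$-measurable random variable $Z$,
\[
\mathbb E\bigl[Z \mathds 1_{\{\tau_X>s\}} \bigm\vert \mathcal F_s\bigr] \;=\; \mathds 1_{\{\tau_X>s\}}\,\frac{\mathbb E\bigl[Z \mathds 1_{\{\tau_X>s\}} \bigm\vert \mathcal F_s^Y\bigr]}{\mathbb P\bigl(\tau_X>s \bigm\vert \mathcal F_s^Y\bigr)}.
\]
Since $\{\tau_X>t\}\subset\{\tau_X>s\}$ for $s\le t$, applying this with $Z=\mathds 1_{\{\tau_X>t\}}$ turns the left-hand side of \eqref{eq:start} into the quotient in the statement, provided I can rewrite the $\mathcal F_s^Y$-conditional numerator as $\mathbb E[\mathds 1_{\{\tau_X>s\}} F(s,t,X_s)\vert \mathcal F_s^Y]$.

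This rewriting is the technical heart of the proof and, I expect, the main obstacle: one has to handle carefully the fact that $Y$ and $X$ are coupled through the shared Brownian motion $W$. I would do it by enlarging the conditioning from $\mathcal F_s^Y$ to $\mathcal F_s^Y\vee \mathcal F_s^X$ via the tower property, writing
\[
\mathbb E\bigl[\mathds 1_{\{\tau_X>s\}}\mathds 1_{\{\inf_{s<u\le t} X_u>\mathbf a\}} \bigm\vert \mathcal F_s^Y\bigr] = \mathbb E\!\left[\mathds 1_{\{\tau_X>s\}}\,\mathbb E\bigl[\mathds 1_{\{\inf_{s<u\le t} X_u>\mathbf a\}}\bigm\vert \mathcal F_s^Y\vee \mathcal F_s^X\bigr] \bigm\vert \mathcal F_s^Y\right],
\]
since $\{\tau_X>s\}\in\mathcal F_s^X$. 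The inner conditional expectation should reduce to $F(s,t,X_s)$: indeed, for $u>s$, $X_u$ is driven only by the increments of $W$ on $(s,u]$, which are independent of $\mathcal F_s^{(W,\widetilde W)}\supset \mathcal F_s^Y\vee \mathcal F_s^X$; combined with the Markov property of $X$, this forces the inner conditional expectation to coincide with the one given $X_s$ alone, i.e.\ with $F(s,t,X_s)$ as defined in \eqref{DefinitionDeF}. Plugging back yields exactly \eqref{eq:start}.

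For \eqref{eq:defaultProbComp}, I would reuse the Markov computation just obtained in the opposite direction: from
\[
\mathbb E\bigl[\mathds 1_{\{\tau_X>s\}} F(s,t,X_s)\bigm\vert \mathcal F_s^Y\bigr] \;=\; \mathbb P\bigl(\tau_X>t\bigm\vert \mathcal F_s^Y\bigr),
\]
the identity \eqref{eq:start} becomes, on the event $\{\tau_X>s\}$,
\[
\mathbb P\bigl(\tau_X>t\bigm\vert \mathcal F_s\bigr) \;=\; \frac{\mathbb P\bigl(\tau_X>t\bigm\vert \mathcal F_s^Y\bigr)}{\mathbb P\bigl(\tau_X>s\bigm\vert \mathcal F_s^Y\bigr)}.
\]
Since the denominator lies in $[0,1]$, the inequality \eqref{eq:defaultProbComp} follows immediately. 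No further obstacle is expected here beyond bookkeeping of the event $\{\tau_X>s\}$ (and harmlessly excluding the null set where the denominator vanishes).
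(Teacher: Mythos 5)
Your proposal is correct and follows essentially the same route as the paper: the Key lemma for the progressive enlargement gives the quotient $\mathds 1_{\{\tau_X>s\}}\,\mathbb P(\tau_X>t\,\vert\,\mathcal F_s^Y)/\mathbb P(\tau_X>s\,\vert\,\mathcal F_s^Y)$, and the inequality is obtained by the same rearrangement. The only difference is that where the paper cites \cite{ProSag14} for the identity $\mathbb P(\tau_X>t\,\vert\,\mathcal F_s^Y)=\mathbb E[\mathds 1_{\{\tau_X>s\}}F(s,t,X_s)\,\vert\,\mathcal F_s^Y]$, you supply the argument yourself (tower property through $\mathcal F_s^Y\vee\mathcal F_s^X$ plus independence of the post-$s$ Brownian increments from $\mathcal F_s^{(W,\widetilde W)}$), and that argument is sound.
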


\begin{proof} 
%Setting $\tau_X^{(s)} = \inf\{u \ge s,\, X_u \le a \}$ and 
Using a key result in the theory of conditional expectations commonly referred to as the \emph{Key lemma} (see e.g. Lemma 3.1 in \cite{EllJeaYor}) we have
\begin{eqnarray}
   \mathbb P\left( \tau_X > t  \Big\vert \mathcal F_s\right)
   &=&\mathds{1}_{ \{ \tau_X > s  \}} \mathbb E \left( \mathds{1}_{ \{\tau_X > t \}} \Big\vert \mathcal F_s^{Y} \vee \mathcal F_s^H \right)\nonumber\\ &=&  \mathds{1}_{ \{ \tau_X > s  \}} \frac{\mathbb E \left( \mathds{1}_{ \{ \tau_X > t  \}}  \Big\vert \mathcal F_s^{Y} \right)}{\mathbb P \left(  \tau_X > s    \Big\vert \mathcal F_s^{Y} \right)} \nonumber  \\
 & =  & \mathds{1}_{ \{ \tau_X > s  \}} \frac{\mathbb P \left(  \tau_X > t  \vert \mathcal F_s^{Y} \right)}{\mathbb P \left(  \tau_X > s    \Big\vert \mathcal F_s^{Y} \right)}.  \label{EqInegBetweenDefaultProb}
 \end{eqnarray}
%\textcolor{blue}{j'enleverais le paragraphe rouge ci-dessous qui n'apporte rien de vraiment utile me semble-t-il. %Par contre, l'espression ci-dessous permet de voir que sur $\{\tau_X>s\}$, on a
%$$\mathbb P \left( \tau_X > t    \Big\vert \mathcal F_s^{Y} \right)=\mathbb P\left( \tau_X > t  \Big\vert \mathcal F_s\right)\mathbb P \left(  \tau_X > s    \Big\vert \mathcal F_s^{Y} \right)\leq \mathbb P\left( \tau_X > t  \Big\vert \mathcal F_s\right)
%$$
%}
%\textcolor{red}{Notice that the above result is indeed intuitive, as soon as we know the proper definition of  $(\mathcal F_s^Y \vee \mathcal F_s^H)$-measurability  (see, e.g., Proposition 2.3 in \cite{CalJeaZar}): if for an integrable random variable $Y$ we define $Y_s := \mathbb E(Y \vert \mathcal F_s^Y \vee \mathcal F_s^H )$, then there exists an $\mathcal F_s^Y$-measurable r.v. $y_s$ such that 
%$$ 
%\mathds{1}_{ \{ \tau_X > s  \}} Y_s =  \mathds{1}_{ \{ \tau_X > s  \}} y_s
%$$
%}
The proof is completed by noting that $\mathbb P \left( \tau_X > t    \vert \mathcal F_s^{Y} \right)=\mathbb E\left[\mathds{1}_{ \{ \tau_X > s  \}} F(s,t,X_s) \vert \mathcal F_s^{Y} \right]$ (see e.g. \cite{ProSag14}). %\textcolor{blue}{je mettrais comme corollaire ici les relations en bas de p6 et (14). }  
The particular case follows from \eqref{EqInegBetweenDefaultProb} since on the event $\{ \tau_X>s \}$, 
$$\mathbb P\big(\tau_{ X}>t \vert {\cF}^{ Y}_s\big)=\mathbb P(\tau_{ X} >t \vert {\cF}_s)\,\mathbb P\big(\tau_{ X}>s \vert {\cal F}^{ Y}_s \big)\leq \mathbb P(\tau_{ X} >t \vert { \cF}_s) $$
or, equivalently, $\mathbb P\big(\tau_{ X} \le t \vert { \cF}_s\big) \le \mathbb P\big(\tau_{ X}\le t \vert {\cal F}_s^{ Y}\big).$
%\textcolor{blue}{ce n'est pas  (4)... faut-il considerer ce resultat comme le resultat final ? Si oui alors changer l'enonce. Je ne comprends pas l'utilite d'avoir F.} \textcolor{green}{Ici, le resultat final decoule d'un autre papier, c'est une reference et pas une equation. On peut citer Abass et Profeta, la fin de la preuve est montree la bas. L'introduction de F a un sens ici parce qu'on l'utilise tout au long du papier et on sera interesse a l'estimer plus tard par quantification.}
\end{proof}
\begin{rem}
The inequality \eqref{eq:defaultProbComp}, confirmed by the numerical experiments in Section \ref{sec:numerics}, means that the less we have  information on the state of the system ($\cF_s^{ Y} \subset \cF_s^{ Y} \vee \cF_s^{ H}={\cF}_s$), the higher the default probability. This also shows the difference with \cite{ProSag14} where the quantity of interest is just  $\mathbb P\big(\tau_{ X}>t \vert {\cal F}_s^{ Y}\big)$.
\end{rem}

\subsection{The problem}

%\textcolor{blue}{La description du modèle s'arrête ici. Il me semble que la section suivante devrait commencer ici. Maintenant qu'on a posé les bases, on peut presenter le probleme.}

Note that we have clearly stated the expression of interest, namely  the survival probability of the reference entity up to time $t$ conditional upon the investor's information up to time $s$, %$\mathbb P(\tau_X\geq t|\cF_s)$,  $0 \le s < t$ 
we need to actually compute it. In order to even more comply with real market practice, we further consider that we can only access to, say, \emph{discrete time} observations of $Y$ up to time $s$.   To that end, let us start by fixing a time discretization grid over $[0,t]$:
$$
0=t_0< \dots<t_m=s < t_{m+1}<\dots< t_n=t.
$$

Our aim  is to approximate the right hand side of \eqref{eq:start} by recursive quantization. In some specific models (those for which \eqref{EqSignalStatePr} admits an explicit solution $(X,Y)$, like in the Black-Scholes framework), we will consider the discrete trajectories $(X_{t_k}, Y_{t_k})_{k=0, \ldots, n}$. In more general models, we need to make a discrete time approximation of the quantity of interest. To this end, we suppose that we have access to a %discrete 
trajectory of $Y$ sampled at $m$ times: 
%\textcolor{blue}{ [pq bar ? pour dire que c'est une seule trajectoire ? Ou pour dire que ce n'est pas une ``vraie'' trajectoire de $Y$, mais une trajectoire approximee grace au schema d'Euler ? Pourquoi est-ce important ?]} \textcolor{green}{$\Bar{Y}$ c'est juste une approximation discrete de $Y$, ici Euler.}  
$(\bar Y_{t_0}, \ldots,\bar Y_{t_m})$,  with $t_0=0$ and $t_m = s$ (which in practice will be approximated  from the paths of the  Euler scheme  associated to the stochastic process $Y$) and will estimate \eqref{eq:survivalProb}
%(\textcolor{green}{je citerai \eqref{eq:survivalProb} et pas \eqref{eq:start}}) (NOT A GOOD REFERENCE!) 
by
\[
 \frac{\mathbb P \left( \tau_X > t    \Big\vert {\cal F}_s^{\bar Y} \right)}{\mathbb P \left(  \tau_X > s    \Big\vert {\cal F}_s^{\bar Y} \right)}, 
\]
on the event $ \{ \tau_X > s  \}$, where  ${\cal F}_s^{\bar Y}$ $= \sigma(\bar Y_{t_k}, \ t_k \le s)$ $= \sigma(\bar Y_{t_0}, \ldots, \bar Y_{t_m})$.

\subsection{Discrete time approximation}
%We are interested in computing the survival probability of $X$ up to time $t$, given observations of $Y$ up to time $s$, with $0 \le s < t$. We will suppose that we have access to \emph{discrete time} observations (MORE ON ECONOMIC JUSTIFICATION HERE), so, first of all, let us fix a time discretization grid over $[0,t]$:
%$$
%0=t_0< \dots<t_m=s < t_{m+1}<\dots< t_n=t.
%$$
We denote by $\bar X$ the continuous Euler scheme associated  to the process $X$ in Equation \eqref{EqSignalStatePr}, namely:
$$
\bar{X}_s  = \bar{X}_{\underline{s}} + b( \underline{s},\bar{X}_{\underline{s}}) (s - \underline{s}) + \sigma( \underline{s},\bar{X}_{\underline{s}}) (W_{s} - W_{\underline{s}}), \quad  \bar{X}_0 = x_0,
$$
with $\underline{s} = t_k $ if $s \in [t_k,t_{k+1})$, for $k=0,\ldots,n$.
%In the model we consider (see Equation (\ref{EqSignalStatePr})), the discrete time observation processes $\{\bar X_{t_k},\ k=0,\dots,n\}$ and  $\{\bar Y_{t_k}, \  k=0,\dots,m\}$ are  obtained from Euler scheme as:
Based on the Euler scheme, we introduce the discretized version of our state-observation processes $(\bar X, \bar Y)$
\setlength\arraycolsep{1pt}
\begin{equation}  
\begin{cases} 
 \bar X_{t_{k+1}} = \bar{X}_{t_k} +  b(t_k,\bar{X}_{t_k}) \Delta_k + \sigma(t_k,\bar X_{t_k}) (W_{t_{k+1}}-W_{t_k})   \\
 \bar  Y_{t_{k+1}} =  \bar  Y_{t_k}  + h(t_k,  \bar  Y_{t_k}, \bar X_{t_k}) \Delta_k + \nu(t_k,  \bar Y_{t_k}) (W_{t_{k+1}}-W_{t_k}) + \delta(t_k,  \bar  Y_{t_k}) (\widetilde W_{t_{k+1}} -\widetilde W_{t_{k}})
 \end{cases}
 \end{equation} 
where  $k \in \{ 0, \dots, n-1 \}$ for the signal process and $k \in \{0,\dots,m-1\}$ for the observation process and where $\Delta_k := t_{k+1}-t_{k}$.

 Supposing that  we have access to a discrete trajectory of $Y$, $( \bar  Y_{t_0},\ldots, \bar  Y_{t_m})$, our first goal is  to approximate (recall that $t_m=s$)
\begin{equation} \label{EqDisCondHitTimeEstim}
\frac{\mathbb P \big(\tau_{ X} > t \vert\, \mathcal F_s^Y \big)}{\mathbb P \big(\tau_{X} > s \vert\, \mathcal F_s^Y \big)} \quad \textrm{by} \quad \frac{\mathbb P \big(\tau_{\bar X} > t \vert {\cal F}_s^{\bar Y}\big)}{\mathbb P \big(\tau_{\bar X} > s \vert {\cal F}_s^{\bar Y}\big)},
\end{equation} 
where (recall Equation (\ref{eq:defTau}))
$$ 
\tau_{\bar X} := \inf\{ u \geq 0,  \bar{X}_u \leq \bf{a}  \}.
$$

Using the Brownian Bridge method and the Markov property of  $( \bar X_{t_k}, \bar Y_{t_k})_{k}$,  we show that the quantity  (\ref{EqDisCondHitTimeEstim}) can be written in  a closed  formula.  

%\textcolor{blue}{Je suggererais de donner le theoreme 4.1 de facon très generale, jusqu'a l'equation 12, en expliquant ce que representent $g$ et $G$. Ensuite, je dirais que dans le cas general, les transitions ne sont pas connues explicitement. Dans ce cas, on peut toujours approximer les trajectoires avec un schéma d'Euler; dans ce cas, les transitions sont normales, ce qui mene a un résultat particulier:  une approximation qui peut etre utilisee dans tous les cas, et conduit a un corrolaire du theoreme 4.1 (general), dans lequel les expressions de $g,G$ prennent des formes particulières. Ce serait nettement plus clair je trouve. Et ca montrerait que lorsqu'on connait les transitions, on peut faire autre chose que de l'Euler.}

\begin{thm} \label{ThmMainResultPhiNLF}
We have:
\begin{equation}   \label{EqForFixedObservationNLF}
\frac{\mathbb P \big(\tau_{\bar X} > t \vert {\cal F}_s^{\bar Y}\big)}{\mathbb P \big(\tau_{\bar X} > s \vert {\cal F}_s^{\bar Y} \big)}  =  \Psi(\bar Y_{t_0},\dots,\bar Y_{t_m}),
\end{equation}
where for  $ y=(y_0,\dots,y_{m}) \in \mathbb R^{m+1}$,  
\begin{equation}  \label{EqExplicitCompSurvivalProba}
\Psi (y) = \frac{\mathbb E \big[\bar{F}(t_m,t_n,\bar X_{t_m}) K^{m}_{\bf{a}}   L_{ y}^m \big] }{ \mathbb E[K^{m}_{\bf{a}}  L_y^m]},
\end{equation}
with 
$$  K^{m}_{\bf{a}} =  \prod_{k=0}^{m-1}  G_{\Delta_k\sigma_k^2}^{\bar{X}_{t_k},\bar{X}_{t_{k+1}}}({\bf{a}}), \qquad  L_y^m =  \prod_{k=0}^{m-1} g_k(\bar X_{t_k},y_k;\bar X_{t_{k+1}},y_{k+1})$$
and where for every $x \in \mathbb R$,
\begin{equation}  \label{EqDefFuncBarF}
 \bar{F}(t_m,t_n,x)  =  \mathbb E \Big[ \prod_{k=m}^{n-1}   G_{\Delta_k\sigma_k^2}^{\bar{X}_{t_k},\bar{X}_{t_{k+1}}}({\bf{a}}) \big\vert\, \bar{X}_{t_m}=x  \Big].
 \end{equation}
The function $g_k$ 
%is defined as $\notag g_k(x_k,y_k;x_{k+1},y_{k+1}) = \mathbb P((\bar X_{t_{k+1}},\bar Y_{t_{k+1}}) = (x_{k+1},  y_{k+1}) \vert (\bar X_{t_{k+1}},\bar Y_{t_{k+1}}) = (x_{k+1},  y_{k+1}))$\textcolor{blue}{represents XXX and}
is defined by   
%\begin{align}
%\notag g_k(x_k,y_k;x_{k+1},y_{k+1}) =   \frac{1}{(2\pi \Delta_k)^{3/2} \sigma^2_k  \delta_k  }  \exp\Bigg( &  -     \frac{\nu_k^2}{2 \delta^2_k \Delta_k} \Big\{   \frac{(y_{k+1} -   m_k^2)^2}{\nu_k^2} +    \frac{(x_{k+1}  - m_k^1)^2}{\sigma^2_k } \nonumber \\
%\label{eqgk} &  -  \frac{2 }{\nu_k\sigma_k  } (x_{k+1}  - m_k^1)(y_{k+1} - m_k^2) \Big\}  \Bigg) 
%\end{align}
\begin{eqnarray}
 g_k(x_k,y_k;x_{k+1},y_{k+1}) &=&  \frac{\mathbb P\big((\bar X_{t_{k+1}},\bar Y_{t_{k+1}}) = (x_{k+1},  y_{k+1}) \vert (\bar X_{t_{k}},\bar Y_{t_{k}}) = (x_{k},  y_{k}) \big)}{\mathbb P\big(\bar X_{t_{k+1}} = x_{k+1} \vert \bar X_{t_k} =x_k\big)} \nonumber \\
 &=&\frac{1}{(2\pi \Delta_k)^{1/2}   \delta_k  }  \exp\Bigg(  -     \frac{\nu_k^2}{2 \delta^2_k \Delta_k} \Big( \frac{x_{k+1}  - m_k^1}{\sigma_k } -   \frac{y_{k+1} -   m_k^2}{\nu_k}  \Big)^2 \Bigg) \label{EqDefFunctg_k}
\end{eqnarray}
 %\\
%\label{eqgk} &  -  \frac{2 }{\nu_k\sigma_k  } (x_{k+1}  - m_k^1)(y_{k+1} - m_k^2) \Big\}  \Bigg) 
%\end{equation}

\noindent with  $m_k^1 := x_k + b_k \Delta_k\;$ and $\;m_k^2 := y_k + h_k  \Delta_k$. Finally,  
\begin{eqnarray} \label{EqDefOfG}
   G_{\Delta_k\sigma_k^2}^{x_{k},x_{k+1}}({\bf{a}}) &=& \mathbb P\big(\inf_{u \in [t_k,t_{k+1}]} \bar X_u \ge {\bf a} \vert \bar X_{t_k} =x_k\big)\nonumber\\
   &=& \left( 1-\exp\left(-  \frac{ 2 (x_k- {\bf{a}}) (x_{k+1}- {\bf{a}})}{ \Delta_k \sigma^2  (t_k,x_k) }\right) \right) \mathds {1}_{\{x_k \geq {\bf{a}};\; x_{k+1} \geq   {\bf{a}}\} }.  
   \end{eqnarray}

\end{thm}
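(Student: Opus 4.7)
The plan is to combine three classical tools: the Brownian bridge non-crossing formula for the continuous Euler scheme, the Markov property of the skeleton $(\bar X_{t_k}, \bar Y_{t_k})_k$ (together with the fact that $\bar X$ is Markov on its own), and a Bayes/filtering identity that rewrites a conditional expectation given $\mathcal F_s^{\bar Y}$ as a ratio of unconditional expectations weighted by a likelihood factor. The output $\Psi$ is then obtained by matching these pieces.

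First I would condition on the discrete skeleton $(\bar X_{t_0},\dots,\bar X_{t_n})$. On each interval $[t_k,t_{k+1}]$, the continuous Euler process is a Brownian motion with drift and variance $\sigma_k^2$, so conditional on its endpoints it is a Brownian bridge; the classical formula for its infimum yields exactly the factor $G^{\bar X_{t_k},\bar X_{t_{k+1}}}_{\Delta_k\sigma_k^2}({\bf a})$ in \eqref{EqDefOfG}. By independence of increments this gives
$$\mathbb P\bigl(\tau_{\bar X}>t_n\,\big|\,\bar X_{t_0},\dots,\bar X_{t_n}\bigr)=\prod_{k=0}^{n-1} G^{\bar X_{t_k},\bar X_{t_{k+1}}}_{\Delta_k\sigma_k^2}({\bf a}),$$
and similarly with $n$ replaced by $m$ for the event $\{\tau_{\bar X}>s\}$. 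Taking conditional expectation with respect to $(\bar X_{t_0},\dots,\bar X_{t_m})$ and using the Markov property of $\bar X$ alone, the trailing factors for $k=m,\dots,n-1$ integrate to $\bar F(t_m,t_n,\bar X_{t_m})$, while the first $m$ factors assemble into $K^m_{\bf a}$.

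Next, I would set up the filtering representation. Since the conditional law of $\bar X_{t_{k+1}}$ given $(\bar X_{t_k},\bar Y_{t_k})$ coincides with its law given $\bar X_{t_k}$ alone, the joint transition density factors as the marginal $\bar X$-density times $g_k$, with $g_k$ as in \eqref{EqDefFunctg_k}. A direct application of Bayes' formula (or a Girsanov-type change of reference measure under which $\bar Y_{0:m}$ becomes independent of $\bar X_{0:m}$) then yields, for any bounded measurable $\varphi$,
$$\mathbb E\bigl[\varphi(\bar X_{t_0},\dots,\bar X_{t_m})\,\big|\,\bar Y_{t_0},\dots,\bar Y_{t_m}\bigr]=\frac{\mathbb E[\varphi(\bar X_{t_0},\dots,\bar X_{t_m})\,L^m_{\bar Y}]}{\mathbb E[L^m_{\bar Y}]}.$$
Plugging $\varphi=\bar F(t_m,t_n,\bar X_{t_m})K^m_{\bf a}$ for the numerator and $\varphi=K^m_{\bf a}$ for the denominator of \eqref{EqDisCondHitTimeEstim}, and evaluating at $y=(\bar Y_{t_0},\dots,\bar Y_{t_m})$, produces exactly \eqref{EqExplicitCompSurvivalProba}.

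It only remains to derive the explicit form \eqref{EqDefFunctg_k} of $g_k$. Given $(\bar X_{t_k},\bar Y_{t_k})=(x_k,y_k)$, the increments $(\bar X_{t_{k+1}}-m_k^1,\bar Y_{t_{k+1}}-m_k^2)$ form a centred Gaussian vector driven by $(\sigma_k\Delta W,\nu_k\Delta W+\delta_k\Delta\widetilde W)$; computing the conditional law of $\bar Y_{t_{k+1}}$ given $\bar X_{t_{k+1}}$ gives a Gaussian with variance $\delta_k^2\Delta_k$ and mean $m_k^2+(\nu_k/\sigma_k)(\bar X_{t_{k+1}}-m_k^1)$, and an algebraic rearrangement of its density yields the stated form. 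The main obstacle is the Bayes step: it must be justified that the conditional density of $\bar X_{0:m}$ given $\bar Y_{0:m}$ is proportional to $L^m_{\bar Y}$ times the unconditional density of $\bar X_{0:m}$, which relies on the factorisation of the joint transition density and on $\delta_k>0$ to ensure well-defined Gaussian densities for $\bar Y$.
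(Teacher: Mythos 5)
Your argument is correct and follows exactly the route the paper indicates (Brownian-bridge non-crossing probabilities for the $G$ factors, the Markov property of $\bar X$ to collapse the factors beyond $t_m$ into $\bar F$, and the discrete Bayes/Kallianpur--Striebel formula with likelihood $L^m_y$); the paper itself merely cites Theorem 2.5 of \cite{ProSag14} for the two conditional probabilities whose ratio gives $\Psi$. Your write-up simply supplies the details that the paper outsources to that reference, including the correct Gaussian computation yielding \eqref{EqDefFunctg_k}.
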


 \begin{proof}
 Following Theorem 2.5. in \cite{ProSag14}, we have, 
 %  \[
 %  \mathbb P \big(\tau_{\bar X} > t \vert {\cal F}_s^{\bar Y} \big) = \frac{\mathbb E \big[\bar{F}(t_m,t_n,\bar X_{t_m}) K^{m}_{{\bf{a}}}  L_{ y}^m \big] }{ \mathbb E[   L_y^m]}  
 %     \]
 %  and 
 %  \[
 %    \mathbb P \big(\tau_{\bar X} > s \vert {\cal F}_s^{\bar Y} \big)= \frac{\mathbb E \big[ K^{m}_{{\bf{a}}}  L_{ y}^m \big] }{ \mathbb E[  L_y^m]}.
 %    \]
%The result follows.
  \[
   \mathbb P \big(\tau_{\bar X} > t \vert {\cal F}_s^{\bar Y} \big) = \frac{\mathbb E \big[\bar{F}(t_m,t_n,\bar X_{t_m}) K^{m}_{{\bf{a}}}  L_{ y}^m \big] }{ \mathbb E[   L_y^m]}  
      ~~\text{ and }~~ 
     \mathbb P \big(\tau_{\bar X} > s \vert {\cal F}_s^{\bar Y} \big)= \frac{\mathbb E \big[ K^{m}_{{\bf{a}}}  L_{ y}^m \big] }{ \mathbb E[  L_y^m]}.
     \]
 \end{proof}

The question of interest is now to know how to estimate efficiently  $\Psi(y)$ for $y=(\bar Y_{t_0}, \ldots, \bar Y_{t_m})$.  Owing to the form of the random vector $ K_{{\bf{a}}}^{m}$, we may put it together with $ L_y^m$ to be reduced  to similar formula as the filter estimate in a standard nonlinear filtering problem. In other work we may write  for  $ y:=(y_0,\dots,y_{m})$,  %\textcolor{blue}{%y a-t-il une raison pour utiliser $\mathbb L$ au lieu de $L$ ? On a utilise $K$ et non $\mathbb K$, c'est un peu perturbant...\textcolor{green}{j'ai remis partout des $\mathds K$.} 
%Ci-dessous, $\Pi_{y,m} $ n'est pas defini. Si c'est precisement la definition, il faut un := au lieu d'un =} \textcolor{green}{c'est une definition: $\Pi$ est un operateur qui agit sur la fonction $\Bar{F}$.}\textcolor{blue}{ok, alors je mets un := avant le ratio des esperances. Mais c'est perurbant, car plus bas, $\widehat{\Pi}$ est defini en fonction de $\hat{\pi}$,  du coup on s'attend a ce que $\Pi$ soit egalement defini en fonciton de $\pi$...}
%\begin{equation*}  
%\Psi (y) = \frac{\mathbb E \big[\bar{F}(t_m,t_n,\bar X_{t_m}) \,  L_{ y,\bf{a}}^m \big] }{ \mathbb E [L_{y,{\bf{a}}}^m]}\;,%:= \Pi_{y,m} \bar F(t_m,t_n, \cdot),
%\end{equation*}
%where
% \[
% L_{y,{\bf{a}}}^m =  \prod_{k=0}^{m-1}  g_k(\bar X_{t_k},y_k;\bar X_{t_{k+1}},y_{k+1}) {\small \times} G_{\Delta_k\sigma_k^2}^{\bar{X}_{t_k},\bar{X}_{t_{k+1}}}({\bf{a}}) .
% \]

\begin{equation*}  
\Psi (y) = \frac{\mathbb E \big[\bar{F}(t_m,t_n,\bar X_{t_m}) \,  L_{ y,\bf{a}}^m \big] }{ \mathbb E [L_{y,{\bf{a}}}^m]} %\;,%:= \Pi_{y,m} \bar F(t_m,t_n, \cdot),
~~\text{ where }~~ L_{y,{\bf{a}}}^m =  \prod_{k=0}^{m-1}  g_k(\bar X_{t_k},y_k;\bar X_{t_{k+1}},y_{k+1}) {\small \times} G_{\Delta_k\sigma_k^2}^{\bar{X}_{t_k},\bar{X}_{t_{k+1}}}({\bf{a}})\;.
\end{equation*}

\section{Approximation  by recursive quantization}\label{sec:Quanti}
Notice that defining the operator  $\pi_{y,m}$, for every bounded measurable function $f$, by 
$$  \pi_{y,m} f := \mathbb E \big[f( \bar X_{t_m}) L_{y, {\bf{a}}}^m  \big],$$
we  have
\begin{equation} \label{EqRedefinePi}
 \Psi(y)   = \frac{\pi_{y,m} \bar F(t_{m},t_n,\cdot)}{\pi_{y,m} \mbox{\bf 1}}=:  \Pi_{y,m} \bar F(t_m,t_n,\cdot),   
 \end{equation}
where $\mbox{\bf 1}(x) =1$, for every real $x$.  Then  it is enough  to tell how to compute the numerator 
\begin{equation*}
 \pi_{y,m} \bar F(t_{m},t_n,\cdot)  =  \mathbb E \Big[ \bar F(t_{m},t_n,\bar X_{t_m}) \prod_{k=0}^{m-1} g_k^{\bf{a}}( \bar X_{t_k},y_k; \bar X_{t_{k+1}}, y_{k+1}) \Big]
 \end{equation*}
where 
 $$ g_k^{\bf{a}}( \bar X_{t_k},y_k; \bar X_{t_{k+1}}, y_{k+1})  =  g_k( \bar X_{t_k},y_k; \bar X_{t_{k+1}}, y_{k+1}) \times G_{\Delta_k \sigma_k^2}^{\bar X_{t_{k}},\bar X_{t_{k+1}}} ({\bf{a}}).$$
 
At this stage, several methods involving Monte Carlo simulations as the particle method can be used   to approximate  $\Pi_{y,m}$.   Optimal quantization is an alternative and some times as  a substitute  to the Monte Carlo method  to  approximate such a quantity (we  refer to  \cite{Sel} for a comparison of  particle like methods and optimal quantization methods). 
 
 To use the  optimal quantization methods we  have to quantize   the marginals  of the process $(\bar X_{t_k})_k$, means, to represent every marginal $\bar X_{t_k}$, $k=0,\ldots,n$,  by a discrete random variable  $\widehat X_{t_k}^{\Gamma_{k}}$ (we will simply denote it  $\widehat X_{t_k}$ when there is no ambiguity) taking $N_k$ values $\Gamma_k = \{x^k_1,\ldots,x^k_{N_k}  \}$. As we will see later,  we have also need in our context  to compute the transition probabilities $\hat p_k^{ij} = \mathbb P (\widehat X_{t_{k}} = x^{k}_j \vert \widehat X_{t_{k-1}} = x^{k-1}_i)$, for $i=1,\ldots,N_{k-1}$; $j=1,\ldots,N_k$.  To this end,  we may use stochastic algorithms to get the optimal grids and the associated transition probabilities (see e.g. \cite{PagPha, Sel}).     This method  works well but may be  very time consuming.  The so-called marginal functional quantization method (see \cite{CalSag, ProSag14, Sag}) is used as an alternative to the  previous method.  It  consists  to construct the marginal quantizations  by  considering the ordinary differential equation (ODE)  resulting to  the substitution of  the Brownian motion appearing in the dynamics of $X$ in \eqref{EqSignalStatePr}  by a quadratic quantization  of  the  Brownian motion (see \cite{LusPag1}).   This procedure performs the marginal quantizations  quite instantaneous and  works well enough  from the numerical point of view even if the rate of convergence (which has not been computed yet from the theoretical point of view) seems to be poor.   As an alternative to the  two previous  methods, we propose the recursive   marginal  quantization (also called  fast quantization) method introduced in \cite{PagSagMQ}.  It consists of quantizing the process  $(\Bar X_{t_k})_{k=0,\ldots,n}$, based on a recursive method involving  the conditional distributions   $\Bar X_{t_{k+1}} \vert \Bar X_{t_k}$, $k=0,\ldots,n-1$. For the problem of interest, this last method is more performing than the previous ones due to its computation speed  and to its robustness.  
 
  On the other hand, the function $\bar F$ has been estimated by Monte Carlo method in \cite{CalSag, ProSag14}. For competitiveness reasons of   the recursive quantization w.r.t. the previously raised methods, we propose here  to approximated both quantities $\Pi$ and $\bar F$ by the recursive quantization method. 

\subsection{Approximation of $\Pi_{y,m}$ by recursive quantization} 

 Given that the denominator  in the right hand side of   \eqref{EqRedefinePi} has a similar form as the  numerator, we will only show how to compute the numerator.  We remark that   $\pi_{y,m}$ can be computed from the following recursive formula:

\begin{equation} \label{Eqpi_n fIntro}
\pi_{y,k}   = \pi_{y,k-1} H_{y,k},  \qquad k=1,\dots, m,
\end{equation}
where, for every $k=1,\ldots,m$, and for every  bounded and measurable function $f$,  the transition kernel $H_{y,k}$  is defined by
% \begin{equation*} 
% H_{y,k} f(z)   =   \mathds{E} \left[ f(\bar{X}_{t_k})  g_{k-1}^{\bf{a}}(\bar X_{t_{k-1}},y_{k-1};  \bar X_{t_{k}}, y_{k})  \vert  \bar{X}_{t_{k-1}} =z \right] 
% \end{equation*}
%with 
%\begin{equation*} \label{def_kernel_H0}
% H_{y,0} f:= \mathbb{E}[f(\bar{X}_0)].
% \end{equation*}
 %
 \begin{equation*} \label{def_kernel_H0}
 H_{y,k} f(z)   =   \mathds{E} \left[ f(\bar{X}_{t_k})  g_{k-1}^{\bf{a}}(\bar X_{t_{k-1}},y_{k-1};  \bar X_{t_{k}}, y_{k})  \vert  \bar{X}_{t_{k-1}} =z \right] 
 %\end{equation*}
~~\text{ with }~~
%\begin{equation*} 
 H_{y,0} f:= \mathbb{E}[f(\bar{X}_0)]\;.
 \end{equation*}
%
 %\textcolor{blue}{n'a-t-on pas dans notre cas $\bar X_0=X_0=x_0=y_0$ ? \textcolor{green}{ca sera une hypothese trop forte a mon avis car $\Bar{X}_0$ peut etre aleatoire dans un cadre plus general. La simplification c'est dans la partie numerique mais pas ici. On essaie d'etre plus general.} Attention que $Y$ est une version bruitee de $X$, mais pas en 0, car $y_0$ semble connu (different de $Y_t=X_t+\epsilon_t$ avec $\epsilon_0$ random). [pour repondre a Cheikh, ce n'est pas ce qui est suppose au depart. On suppose $X_0=x_0$ et $Y_0=y_0$, donc, $\bar{X}_0=x_0$ puisque $\bar{X}$ est un schema d'Euler sur $X$ (qui demarre en 0 j'image).]}

 In fact, for any bounded Borel function $f$ we have
 \begin{eqnarray*}
 \pi_{y,k} f  &=& \mathbb E\Big[ f(\bar X_{t_k}) \prod_{\ell=0}^{k-1}g_{\ell}^{\bf a}(\bar X_{t_{\ell}},y_{\ell};\bar X_{\ell+1},y_{\ell+1}) \Big] \\
 &=& \mathbb E\Big[ \mathbb E\Big(f(\bar X_{t_k}) \prod_{\ell=0}^{k-1}g_{\ell}^{\bf a}(\bar X_{t_{\ell}},y_{\ell};\bar X_{\ell+1},y_{\ell+1}) \big \vert {\cal F}_{k-1}^{\bar X} \big)\Big].
  \end{eqnarray*}
 Since $\bar X$ still be a Markov process we deduce that 
 \begin{eqnarray*}
 \pi_{y,k}f &=& \mathbb E\Big[ \mathbb E\Big(f(\bar X_{t_k}) g_{k-1}^{\bf a}(\bar X_{t_{k-1}},y_{k-1};\bar X_{k},y_{k}) \big \vert {\cal F}_{k-1}^{\bar X} \big) \prod_{\ell=0}^{k-2}g_{\ell}^{\bf a}(\bar X_{t_{\ell}},y_{\ell};\bar X_{\ell+1},y_{\ell+1}) \Big] \\
 & = & \mathbb E\Big[ H_{y,k}f(\bar X_{t_{k-1}}) \prod_{\ell=0}^{k-2}g_{\ell}^{\bf a}(\bar X_{t_{\ell}},y_{\ell};\bar X_{\ell+1},y_{\ell+1}) \Big] \\
 & = & \pi_{y,k-1} H_{y,k}f.
 \end{eqnarray*}
 
 Then, when we have access to the quantization of the marginals of the process $\bar X$, the functional $\pi_{y,k}$ can be approximated recursively by optimal quantization as $ \hat {\pi}_{y,k} = \hat{\pi}_{y,k-1} \widehat H_{y,k}$ where for every $k \ge 1$, $\widehat H_{y,k}$ is a matrix $N_k \times N_{k-1}$ which components $\widehat{H}_{y,k}^{i,j}$ read 
 \[
\widehat{H}_{y,k}^{ij} =  g_{k-1}^{\bf{a}}(x^i_{k-1},y_{k-1};  x^j_k, y_{k})  \,  \hat{p}_k^{ij}\, \delta_{x_k^j} 
 \]
 where 
 \[
 p_{k}^{ij} = \mathbb P(\widehat{X}_{t_k} = x^j_{k} \vert \widehat{X}_{t_{k-1}} = x^i_{k-1})
 \]
and $(\widehat{X}_{t_k})_{k}$ is the quantization of the process $(\bar X_t)_{t \geq 0}$  over the time steps $t_k, k=1,\dots,m$:    on the grids  $\Gamma_k = \{ x^1_k,\dots,x^{N_k}_k \}$, of sizes $N_k$.
%and that we have also access to the  transition probabilities $\hat p_{k}^{ij} = \mathbb P(\widehat{X}_{t_k} = x^j_{k} \vert \widehat{X}_{t_{k-1}} = x^i_{k-1})$, 

As a consequence, the quantity of interest $\Pi_{y,m} \bar F(t_m,t_n,\cdot) $ is estimated by  
\begin{equation}  \label{EqPiF(t_n,t_m,)}
 \widehat{\Pi}_{y,m} \bar F(t_m,t_n,\cdot)  = \sum_{i=1}^{N_m}\widehat{\Pi}^i_{y,m} \bar  F(t_m,t_n,x_m^i).  
 \end{equation}
where
$$  \widehat{\Pi}^i_{y,m}  := \frac{\widehat {\pi}_{y,m}^i}{\sum_{j=1}^{N_m}  \widehat{\pi}_{y,m}^{i}}, \quad i=1,\dots,N_m $$
and where $\widehat {\pi}_{y,m}$ is  the estimation (by optimal quantization) of $\pi_{y,m}$  defined recursively by

\begin{equation}\label{eq:hatsmallpi}
\left \{
\begin{array}{ll}
\widehat{\pi}_{y,0} = \widehat H_{y,0} \\
\widehat{\pi}_{y,k} = \widehat{\pi}_{y,k-1} \widehat H_{y,k}:=\Big[   \sum_{i=1}^{N_{k-1}} \widehat H_{y,k}^{i,j} \widehat{\pi}_{y,k-1}^{i}  \Big]_{j=1,\dots,N_k},\quad  k=1,\dots,m 
\end{array}
\right.
\end{equation}
%and 
%\begin{equation}
%\left \{
%\begin{array}{ll}
%\widehat{\varpi}_{y,0} = \widehat \Upsilon_{y,0} \\
%\widehat{\varpi}_{y,k}= \widehat{\varpi}_{y,k-1} \widehat \Upsilon_{y,k}:=\Big[   \sum_{i=1}^{N_{k-1}} \widehat \Upsilon_{y,k}^{i,j} \widehat{\varpi}_{y,k-1}^{i} \Big]_{j=1,\dots,N_k}, k=1,\dots,m 
%\end{array}
%\right.
%\end{equation}
with  
%$$ \widehat{H}_{y,k}^{ij} = g_{k-1}^{{\bf a}}(x^i_{k-1},y_{k-1};  x^j_k, y_{k})  \,  \hat{p}_k^{ij}, \quad i=1,\dots,d_N; j=1,\dots,d_N; $$
\begin{equation}
%\left \{
%\begin{array}{ll}
\widehat{H}_{y,k}^{ij} =  g_{k-1}^{\bf{a}}(x^i_{k-1},y_{k-1};  x^j_k, y_{k})  \,  \hat{p}_k^{ij}\, \delta_{x_k^j} .  
 %\widehat{\Upsilon}_{y,k}^{ij} = g_{k-1}(x^i_{k-1},y_{k-1};  x^j_k, y_{k})  \,  \hat{p}_k^{ij}, \quad i=1,\dots,N_{k-1};  \ j=1,\dots,N_k.
%\end{array}
%\right.
\end{equation}

%Remark that in \cite{ProSag}, the  process $(\widehat{X}_{t_k})_{k}$ is obtained from the marginal functional quantization method and that for every $i=1,\ldots,N_m$,  the quantity $F(t_m,t_n,x_m^i)$ appearing in \eqref{EqPiF(t_n,t_m,)} and defined by \eqref{EqDefFuncBarF} is estimated by Monte Carlo simulations. Even if the marginal functional quantization error bounded is not computed its rate of convergence seems to be numerically poor. On the other hand, the Monte Carlo phase is very time consuming. 

%\textcolor{blue}{Ici, $\widehat{\Pi}$ est defini en fonction de $\hat{\pi}$. La definition est-elle similaire pour $\Pi$ (vs $\pi$) ? Ce symbole ($\Pi$) est introduit plus haut, avant la section 5, et utilisee de maniere semblable dans (15)}. \textcolor{green}{je ne trouve pas de probleme a ce niveau, c'est la meme chose, c'est juste des versions exactes vs des versions approchees.}\textcolor{blue}{ce que je veux dire, c'est qu'on s'attend a avoir une definition de $\Pi$ similaire a l'eq sous (17), mais elle n'apparait pas.}

Our aim is now  to use the (marginal) recursive quantization  method to estimate the  $F(t_m,t_n,x_m^i)'$s. 
%\textcolor{blue}{je pense que ce serait bien de donner un peu de structure a cette section car on a difficile de s'y retrouver avec toutes les notations. Peut-etre  expliquer la demarche en debut de section, puis proceder en sous-sections?} \textcolor{green}{Les details sont bien expliques dans les articles precites en debut de section. Ici il s'agit d'utiliser les memes resultats pour faire autrement. Peut-etre que Abass pourra opporter plus de reponses.} 

\subsection{Approximation of $\bar F(t_m,t_n,\cdot)$ by recursive quantization}

Recall that for every $x$,
\[
 \bar  F(t_m,t_n,x) =   \mathbb E \Big( \prod_{k=m}^{n-1}  G_{\Delta_k\sigma_k^2}^{\bar X_{t_{k}}, \bar X_{t_{k+1}}}({\bf{a}})  \big\vert\, \bar{X}_{t_m} =x \Big). 
 \]
 As previously, we remark that if we define  the functional $\pi_{n,m}$ by
$$\big(\pi_{n,m} f\big)(x)=  \mathbb E \Big(f(\bar{X}_{t_n})  \prod_{k=m}^{n-1} G_{\Delta_k\sigma_k^2}^{\bar X_{t_{k}}, \bar X_{t_{k+1}}}({\bf{a}}) \big\vert\, \bar{X}_{t_m} =x \Big), $$
for every bounded and measurable function  $f$, then $F(t_m,t_n,x)$ reads
$$ \bar F(t_m,t_n,x)  = \big(\pi_{n,m} \mbox{\bf{1}}\big)(x). $$
Now,  for every  bounded and measurable function $f$, defining  as previously   the transition kernel $H_{k}$ as, 
 \begin{eqnarray*}   
 \big(H_{k} f\big)(z)  &  = &  \mathbb{E} \left( f(\bar{X}_{t_k})  G_{\Delta_{k-1}\sigma_{k-1}^2}^{\bar X_{t_{k-1}}, \bar X_{t_{k}}}({\bf{a}})  \vert  \bar{X}_{t_{k-1}} =z \right),  
 \end{eqnarray*}
for every $k=m+1,\dots,n$  and setting 
\begin{equation} \label{def_kernel_Hnm}
 H_{m} f =  \mathds{E} \left[ f(\bar{X}_{t_m}) \right]
 \end{equation}
yields  for every $k=m+1,\dots,n$,
 \begin{eqnarray*}
(\pi_{k,m} f)(x) & =  &  \mathds{E} \Big(  \mathds{E} \Big( f(\bar{X}_{t_k}) \prod_{i=m}^{k-1}   G_{\Delta_k\sigma_k^2}^{\bar X_{t_{k}}, \bar X_{t_{k+1}}}({\bf{a}})  \big  \vert (\bar{X}_{t_{\ell}})_{\ell=m,\dots,k-1} \Big) \big \vert \bar{X}_{t_m} =x   \Big)  \\
&  = &  \mathds{E} \Big(  \mathds{E} \Big( f(\bar{X}_{t_k})    G_{\Delta_{k-1}\sigma_{k-1}^2}^{\bar X_{t_{k-1}}, \bar X_{t_{k}}}({\bf{a}})  \vert (\bar{X}_{t_{\ell}})_{\ell=m,\dots,k-1} \Big)  \prod_{\ell=m}^{k-2}  G_{\Delta_{\ell} \sigma_{\ell}^2}^{\bar X_{t_{\ell}}, \bar X_{t_{\ell+1}}}({\bf{a}}) \vert \bar{X}_{t_m}   =x \Big)  \\
&  = &  \mathds{E} \Big(  \mathds{E} \Big( f(\bar{X}_{t_k})     G_{\Delta_{k-1}\sigma_{k-1}^2}^{\bar X_{t_{k-1}}, \bar X_{t_{k}}}({\bf{a}}) \vert \bar{X}_{t_{k-1}} \Big)  \prod_{\ell =m}^{k-2}  G_{\Delta_{\ell} \sigma_{\ell}^2}^{\bar X_{t_{\ell}}, \bar X_{t_{\ell+1}}}({\bf{a}})  \vert \bar{X}_{t_m} =x  \Big)\\
& = & (\pi_{k-1,m} H_{k} f)(x).
\end{eqnarray*}
%This  allows  us to estimate $\pi_{n,m}^x$ by  $\widehat{\pi}_{n,m}^x $  as
%$$ \widehat{\pi}_{n,m}^x =  \widehat{H}_{m}  \circ  \widehat H_{m+1} \circ \dots \circ  \widehat H_{n}.$$
Consequently, if one has access to the recursive  quantizations $(\widehat X_{t_k})_{k=m,\dots,n}$ and the transition probabilities $\{ \hat p_{k}^{ij}, k=m+1,\dots,n \}$  of the process $(\bar X_{t_k})_{k=m,\dots,n}$,   the quantity  $F(t_m,t_n,x)$ will be estimated by 
\begin{equation}  \label{EqEstimationF(t_m,t_n,)}
\widehat F(t_m,t_n, x) = \sum_{j=1}^{N_n} \widehat{\pi} _{n,m} \,\delta_{\{x_m^j =x\}},
\end{equation}
where the $\widehat{\pi}_{n,m}$'s are defined from the following recursive formula
\begin{equation}
\left \{
\begin{array}{ll}
\widehat{\pi}_{m,m} = \widehat H_{m} \\
\widehat{\pi}_{k,m} = \widehat{\pi}_{k-1,m} \widehat H_{k}:=\Big[   \sum_{i=1}^{N_{k-1}} \widehat H_{k}^{i,j} \widehat{\pi}_{k-1,m}  \Big]_{j=1,\dots,N_k}, k=m+1,\dots,n 
\end{array}
\right.
\end{equation}
 with 
$$ \widehat{H}_{k}^{ij} = G_{\Delta_{k-1}\sigma_{k-1}^2}^{x^i_{k-1}, x^j_k }({\bf{a}})  \,  \hat{p}_k^{ij}\, \delta_{x_k^j}, \quad i=1,\dots,N_{k-1}; j=1,\dots,N_k.$$

\subsection{Approximation of $\Pi_{y,m} \bar F(t_m,t_n,\cdot)$ by recursive quantization}
Combining  equations (\ref{EqPiF(t_n,t_m,)}) and (\ref{EqEstimationF(t_m,t_n,)}),   the conditional survival probability $\Pi_{y,m} F(t_m,t_n,\cdot)$ will be estimated (for a fixed  trajectory $(y_0,\dots,y_m)$ of the observation process $(Y_{t_0},\dots,Y_{t_m})$) by
\begin{equation}  \label{EqFiniteApproxForm}
  \widehat{\Pi}_{y,m} \widehat F(t_m,t_n,\cdot)  = \sum_{i=1}^{N_m}  \sum_{j=1}^{N_n}   \widehat{\Pi}^i_{y,m}  \widehat{\pi} _{n,m}\, \delta_{x_m^j} .  
\end{equation} 
\begin{rem}\label{rem:AbbCal}
In Section \ref{sec:numerics}, the formula \eqref{EqFiniteApproxForm} will be compared to the one  of interest in  \cite{ProSag14}: $\mathbb P \big(\tau_{\bar X} > t \vert {\cal F}_s^{\bar  Y} \big)$, which reads (following the previous notations) 
\begin{equation}\label{eq:condidionalsurvY}
   \mathbb P \big(\tau_{\bar X} > t \vert\, (\bar Y_{t_0}, \ldots, \bar Y_{t_m}) =y \big) = \frac{\mathbb E \big[\bar{F}(t_m,t_n,\bar X_{t_m})  L_{ y,{\bf{a}}}^m \big] }{ \mathbb E[   L_y^m]}.  
\end{equation}
The conditional probability has been approximated in \cite{ProSag14} via an hybrid Monte Carlo - optimal quantization method. It may be approximated following the procedure we propose using only optimal quantization method as 

\begin{equation}\label{eq:FinapproxY}
    \widehat{\varpi}_{y,m} \widehat F(t_m,t_n,\cdot)  = \sum_{i=1}^{N_m}  \sum_{j=1}^{N_n}   \widehat{\varpi}^i_{y,m}  \widehat{\pi} _{n,m}\, \delta_{x_m^j} 
\end{equation}
%\[\Phi(y)=\varpi_{y,m}\bar{F}(t_m,t_n,\cdot)=\frac{\mathbb E %\big[\bar{F}(t_m,t_n,\bar X_{t_m}) K^{m}_a \mathbb  L_{ y}^m %\big] }{ \mathbb E[ \mathbb  L_y^m]}\;,\]
%the conditional survibal probaility %\eqref{eq:condidionalsurvY} can be estimated by 
where the $\widehat{\varpi}^i_{y,m}$'s are obtained from  \eqref{eq:hatsmallpi} by replacing the function $g^{\bf{a}}_k$ by $g_k$ of equation \eqref{EqDefFunctg_k}.
\end{rem}

Let us say now how to quantize the signal process  $X$ from the recursive quantization method.

\subsection{The recursive quantization method}
%The marginal quantization method is based on the quantization of every  marginal $\bar X_{t_k}$ of the ``Euler process'' $(\bar X_{t_k})_k$ using the conditional distribution   $\bar X_{t_k} \vert \bar X_{t_{k-1}}$.  

Recall first that for  a given $\mathds R^d$-valued random vector $X$ defined on  $(\Omega,\mathcal{F},\mathbb{P})$ with distribution $\mathbb{P}_X$,  the $L^r(\mathbb P_{X})$-optimal quantization  problem of size $N$ for  $X$ (or for  the distribution $\mathbb P_X$)  aims to approximate  $X$  by a Borel function   of  $X$  taking  at most  $N$ values. If  $X \in L^r(\mathbb{P})$ and defining ${\Vert X \Vert}_r := {\left(\mathbb E \vert X \vert ^r \right)}^{1/r}$  where $ \vert \cdot \vert $ denotes an arbitrary  norm on $\mathbb{R}^d$,   this turns out to  solve the following optimization problem (see e.g. \cite{GraLus}):
%\begin{eqnarray}
% e_{N,r}(X)  & = & \inf{\{ \Vert X - \widehat{X}^{\Gamma} \Vert_r, \Gamma \subset \mathbb{R}^d,  \textrm{ card}(\Gamma) \leq N \}}  \nonumber \\
%  & = &  \inf_{ \substack{\Gamma  \subset \mathbb{R}^d \\ \textrm{card}(\Gamma) \leq N}} \left(\int_{\mathbb{R}^d} d(x,\Gamma)^r d\mathbb P_X(x) \right)^{1/r}, \label{er.quant}
% \end{eqnarray}
\begin{equation}
 e_{N,r}(X)  = \inf{\{ \Vert X - \widehat{X}^{\Gamma} \Vert_r, \Gamma \subset \mathbb{R}^d, \textrm{card}(\Gamma) \leq N \}} =  \inf_{ \substack{\Gamma  \subset \mathbb{R}^d \\ \textrm{card}(\Gamma) \leq N}} \left(\int_{\mathbb{R}^d} d(x,\Gamma)^r d\mathbb P_X(x) \right)^{1/r}\label{er.quant}
 \end{equation}
where $\widehat{X}^{\Gamma}$,  the   quantization of $X$  on  the subset   $\Gamma = \{x_1,\ldots,x_N  \}\subset \mathbb{R}^d$ (called a codebook, an $N$-quantizer or a grid) is defined by   
$$ \widehat{X}^{\Gamma} = {\rm Proj}_{\Gamma}(X) := \sum_{i=1}^N  x_i  \mathds{1}_{\{X \in C_i(\Gamma)\}}$$ 
 and where      $(C_i(\Gamma))_{ i=1,\ldots,N}$   is a Borel partition (Voronoi partition) of  $\mathbb R^d$  satisfying for every $i \in \{1,\ldots,N\}$,  $$ C_i(\Gamma) \subset \{ x \in \mathbb{R}^d : \vert x-x_i \vert = \min_{j=1,\ldots,N}\vert x-x_j \vert \}.$$

 Keep in mind that for every  $N \geq 1$, the infimum in $(\ref{er.quant})$ is reached at    one  grid at least. Any $N$-quantizer realizing this infimum  is called an  $L^r$-optimal $N$-quantizer. Moreover, if  $\textrm{ card(supp}(\mathbb P_X)) \geq N$ then  the optimal $N$-quantizer is of size   $N$ (see \cite{GraLus} or \cite{Pag}).  %The optimal quantizer is not unique in general except for example in the one dimensional framework    when $X$ has a log-concave density distribution (see  $\cite{Kief, LamPag}$). 
 On the other hand,    the   quantization error, $e_{N,r}(X)$, decreases to zero  at  an  $N^{-1/d}$-rate  as the grid size $N$ goes to infinity. This convergence rate (known as Zador   Theorem) has been investigated in~\cite{BucWis} and~\cite{Zad}  for absolutely continuous probability measures under the  quadratic norm on $\mathbb{R}^d$. A detailed study of the convergence rate under an arbitrary norm on  $\mathbb{R}^d$  and for both   absolutely continuous  and singular measures  may be found in \cite{GraLus}.

 The recursive  quantization of the Euler scheme of an $\mathbb R^d$-valued  diffusion process  has been introduced in \cite{PagSagMQ}.  The method allows to speak of fast online quantization and consists on  a sequence of   quantizations $(\widehat X_{t_k}^{\Gamma_k}  )_{k=0,\ldots,n}$ of the Euler scheme   $(\bar X_{t_k}  )_{k=0,\ldots,n}$  defined recursively as 
% \begin{eqnarray}
%  \widetilde X_0  &= &\bar X_0, \nonumber \\
%\widehat X_{t_k}^{\Gamma_k}   &= &{\rm Proj}_{\Gamma_k}(\widetilde X_{t_k})    \quad \textrm{and}\quad   \widetilde X_{t_{k+1}} = {\cal E}_k(\widehat X_{t_k}^{\Gamma_k} ,Z_{k+1}) ,\; k=0,\ldots,n-1, \label{EqGenTildeX1}
%  \end{eqnarray}
 \begin{equation}
  \widetilde X_0  =\bar X_0\;,\quad
\widehat X_{t_k}^{\Gamma_k}  = {\rm Proj}_{\Gamma_k}(\widetilde X_{t_k})    \quad \textrm{and}\quad   \widetilde X_{t_{k+1}} = {\cal E}_k(\widehat X_{t_k}^{\Gamma_k} ,Z_{k+1}) ,\; k=0,\ldots,n-1, \label{EqGenTildeX1}
  \end{equation}
where $(Z_k)_{k=1,\ldots, n}$ is an  i.i.d. sequence of  ${\cal N}(0;I_q)$-distributed random vectors, independent from  $\bar X_0$ and 
\[
   {\cal E}_{k}(y,z)  = y + \Delta b(t_{k},y) + \sqrt{\Delta} \sigma(t_{k},y) z, \quad y \in \mathbb R^d, \ z \in \mathbb R^q, \ k=0, \ldots, n-1.
   \]
  The sequence of quantizers satisfies for every $k \in \{0, \ldots,n\}$,
\[
 \Gamma_k \!\in  \argmin \{ \widetilde D_{k}(\Gamma),\  \Gamma \subset \mathbb R^d, \ {\rm card}(\Gamma) \leq N_{k} \},
 \]
where for every grid $\Gamma \subset \mathbb R^d$, $\widetilde D_{k+1} (\Gamma)  :=  \mathbb E \big[   {\rm dist}(\widetilde X_{t_{k+1}}, \Gamma)^2 \big]$.

 This  recursive quantization method  raises some   problems among which the computation of the  quadratic error bound  $\Vert \bar X_{t_k}  - \widehat X_{t_k}^{\Gamma_k}  \Vert_{_2} : = \big(\mathbb E\vert  \bar X_{t_k} - \widehat X_{t_k}^{\Gamma_k}  \vert_{_2}\big)^{1/2}$, for every $k=0, \ldots,n$.  It has been shown in \cite{PagSagMQ, PagSagMQJump} that for any sequences of (quadratic) optimal quantizers  $\Gamma_k$ for $\widetilde X_{t_k}^{\Gamma_k} $, for  every $k=0,\dots,n-1$, the quantization error  $\Vert \bar X_{t_{k}} -  \widehat X_{t_{k}}^{\Gamma_k}  \Vert_{_2}$ is bounded by the cumulative quantization errors  $\Vert \widetilde X_{t_{i}} -  \widehat X_{t_{i}}^{\Gamma_i}  \Vert_{_2}$, for $i=0, \ldots,k$. This result is obtained under the following assumptions and is stated in Proposition \ref{PropGlobBound} below:
 
\begin{enumerate}
\item {\em $L^2$-Lipschitz assumption}. The mappings $x\mapsto {\cal E}_k(x,Z_{k+1})$  from $\mathbb R^d$ to $L^2(\Omega,{\cal A}, \mathbb P)$, $k=1:n$ are Lipschitz continuous~i.e.  
\[
({\rm Lip})\quad \equiv \quad\forall\,  x,\, x' \!\in \mathbb R^d,\quad  \big\|{\cal E}_k(x,Z_{k+1})-{\cal E}_k(x',Z_{k+1})\big\|_{_2}\le [{\cal E}_k]_{\rm Lip} |x-x'|,\; k=1:n.
\]  

\item {\em $L^p$-linear growth assumption}.  Let    $\, p\!\in (2,3]$.
%, we introduce  the following $L^p$-sub-linear growth assumption on the functions $F_k$
\[
({\rm SL})_p\; \equiv  \; \forall\, k\!\in \{1,\ldots, n\},\; \forall\,x \!\in \mathbb R^d, \quad  \mathbb E\vert {\cal E}_k(x,Z_{k+1}) \vert^p\le \alpha_{p,k}+\beta_{p,k} |x|^p.
\]
\end{enumerate}

 \begin{prop} \label{PropGlobBound} Let $\widehat X= (\widehat X_{t_k})_{k=0:n}$ be defined by~\eqref{EqGenTildeX1} and suppose that all the grids $\Gamma_k$  are quadratic optimal. Assume  that both assumptions $({\rm Lip})$ and  $({\rm SL})_p$ (for some $p \in (2,3]$)  hold and that $X_0\!\in L^p(\mathbb{P})$.  Then, 
\begin{equation}  \label{EqRecQuantBound}
\big\| \bar X_{t_k}-\widehat X_{t_k}\big\|_{_2} \le C_{d, p} \sum_{i=0}^k   [{\cal E}_{i+1:k}]_{\rm Lip}  \left[ \sum_{\ell=0}^i  \alpha_{p,\ell}\beta_{p,\ell+1:i}\right] ^{\frac 1p} \, N_i^{-\frac 1d}
\end{equation}
where $C_{d, p}>0$ and     $\alpha_{p,0}=\mathbb E\, |X_0|^p = \Vert  X_0 \Vert^{^p}_{_p}$, \ $\beta_{p,\ell:i}=\prod_{m=\ell}^{i}\beta_{p,m}$ (with $\prod_{\emptyset}= 1$) and  
\[
  [{\cal E}_{i:k}]_{\rm Lip}:=  \prod_{\ell =i}^k [{\cal E}_{\ell}]_{\rm Lip},\; 1 \le \ell \le  k\le n \qquad \mbox{ and } \quad  [{\cal E}_{k+1:k}]_{\rm Lip}=1 .
\]
%\textcolor{red}{[def de $[F_{\ell:k}]_{\rm Lip}$ part de $m=\ell$ et $\ell$ part de $1$  ds la def ci-dessus je pense avec nos conventions]}
\end{prop}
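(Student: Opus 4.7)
The plan is a three-step decomposition of the global error into propagated one-step quantization errors, each controlled by a non-asymptotic Zador-type bound, with the $L^p$-moments of the pseudo-scheme $(\widetilde X_{t_k})$ closed off via the linear growth assumption $({\rm SL})_p$.

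First I would split
$\bar X_{t_k}-\widehat X_{t_k} = (\bar X_{t_k}-\widetilde X_{t_k})+(\widetilde X_{t_k}-\widehat X_{t_k})$
and observe that, since $\bar X_{t_k}=\mathcal E_{k-1}(\bar X_{t_{k-1}},Z_k)$ and $\widetilde X_{t_k}=\mathcal E_{k-1}(\widehat X_{t_{k-1}},Z_k)$ with $Z_k$ independent of both $\bar X_{t_{k-1}}$ and $\widehat X_{t_{k-1}}$, the Lipschitz assumption $({\rm Lip})$ gives, by conditioning on the inputs, $\|\bar X_{t_k}-\widetilde X_{t_k}\|_{_2}\le [\mathcal E_{k-1}]_{\rm Lip}\,\|\bar X_{t_{k-1}}-\widehat X_{t_{k-1}}\|_{_2}$. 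Iterating the triangle inequality down to $k=0$ (where $\widetilde X_0=\bar X_0$) produces the telescoping bound
\begin{equation*}
\big\|\bar X_{t_k}-\widehat X_{t_k}\big\|_{_2} \;\le\; \sum_{i=0}^{k}\,[\mathcal E_{i+1:k}]_{\rm Lip}\,\big\|\widetilde X_{t_i}-\widehat X_{t_i}\big\|_{_2}.
\end{equation*}

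Second, to control each one-step quantization error I would invoke the non-asymptotic Pierce/Zador inequality: for any $Y\in L^p(\mathbb R^d)$ and any quadratic-optimal $N$-quantizer $\widehat Y$ of $Y$, one has $\|Y-\widehat Y\|_{_2}\le C_{d,p}\,\|Y\|_{_p}\,N^{-1/d}$. Since by assumption $\Gamma_i$ is quadratic optimal for $\widetilde X_{t_i}$, this yields $\|\widetilde X_{t_i}-\widehat X_{t_i}\|_{_2}\le C_{d,p}\,\|\widetilde X_{t_i}\|_{_p}\,N_i^{-1/d}$.

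Third, I would close the loop by estimating $\|\widetilde X_{t_i}\|_{_p}$ via $({\rm SL})_p$ and the stationarity of quadratic-optimal quantizers. The latter property, $\widehat Y=\mathbb E[Y\mid\widehat Y]$, combined with Jensen's inequality applied to $x\mapsto|x|^p$, yields $\|\widehat X_{t_i}\|_{_p}\le \|\widetilde X_{t_i}\|_{_p}$. Plugging $\widetilde X_{t_{i+1}}=\mathcal E_i(\widehat X_{t_i},Z_{i+1})$ into $({\rm SL})_p$ (by conditioning on $\widehat X_{t_i}$) produces the affine recursion $\|\widetilde X_{t_{i+1}}\|_{_p}^p\le \alpha_{p,i+1}+\beta_{p,i+1}\,\|\widetilde X_{t_i}\|_{_p}^p$. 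A routine unrolling initialized at $\|\widetilde X_0\|_{_p}^p=\|X_0\|_{_p}^p=\alpha_{p,0}$ gives $\|\widetilde X_{t_i}\|_{_p}^p\le \sum_{\ell=0}^{i}\alpha_{p,\ell}\,\beta_{p,\ell+1:i}$. Substituting back into the two previous displays delivers \eqref{EqRecQuantBound}.

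The main obstacle is the Pierce-type bound of the second step: the classical Zador theorem is only asymptotic, whereas here I need a distribution-free, fully non-asymptotic estimate whose constant depends only on $d$ and $p$. This is precisely where the strict inequality $p>2$ enters, and it is the reason $({\rm SL})_p$ is imposed with $p\in(2,3]$ rather than merely at the exponent $p=2$ natural to the quadratic framework.
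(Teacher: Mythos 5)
Your proposal is correct and follows essentially the same route as the proof the paper defers to in \cite{PagSagMQ, PagSagMQJump}: the telescoping decomposition through the pseudo-scheme $\widetilde X$ via $({\rm Lip})$, the non-asymptotic Pierce--Zador bound $\|\widetilde X_{t_i}-\widehat X_{t_i}\|_{_2}\le C_{d,p}\|\widetilde X_{t_i}\|_{_p}N_i^{-1/d}$ (which is exactly why $p>2$ is required), and the affine $L^p$-moment recursion closed by stationarity of optimal quantizers. No gaps.
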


The associated probability weights and transition probabilities are computed from explicit formulas we recall in the following result. 

\begin{prop}
Let $\Gamma_{k+1}$ be a quadratic optimal quantizer for the marginal random variable $\widetilde X_{t_{k+1}}$. Suppose that the quadratic optimal quantizer $\Gamma_k$ for $\widetilde X_{t_k}$ is already computed and that we have access to  its associated  weights $\mathbb P(\widetilde X_{t_k} \in C_i(\Gamma_k))$, $i=1,\ldots,N_k$.  The transition probability $\hat p_k^{ij} = \mathbb P(\widetilde X_{t_{k+1}} \in C_j(\Gamma_{k+1})\vert \widetilde X_{t_k}  \in C_i(\Gamma_{k}))$ $= \mathbb P(\widehat X_{t_{k+1}} =x_{k+1}^j\vert \widehat X_{t_k}  = x_k^i)$ is given by
\begin{eqnarray}\label{eq:transiprob}
\hat p_k^{ij}   & =&  \Phi\big(x_{k+1,j+}(x^{k}_i)\big)  -     \Phi\big(x_{k+1,j-}(x^{k}_i)\big)   \label{EqEstProba},
\end{eqnarray}
 where $\Phi(\cdot)$ is the cumulative distribution function of the standard Gaussian distribution, 
 $$ x_{k+1,j-}(x): = \frac{x_{k+1}^{j-1/2} - m_{k}(x) }{v_k(x)} \quad \textrm{ and } \quad x_{k+1,j+}(x): = \frac{x_{k+1}^{j+1/2} - m_{k}(x) }{v_k(x)},$$
 with   $m_k(x) = x +\Delta b(t_k,x)$, $v_k(x) = \sqrt{\Delta} \sigma(t_k,x)$ and,  for $k=0,\ldots,n-1$ and for    $j=1,\ldots,N_{k+1}$,
$$ x_{k+1}^{j-1/2} = \frac{ x_{k+1}^{j} + x_{k+1}^{j-1}  }{2}, \  x_{k+1}^{j+1/2} = \frac{ x_{k+1}^{j} + x_{k+1}^{j+1}  }{2}, \ \textrm{ with }  x_{k+1}^{1/2} = -\infty,  x_{k+1}^{N_{k+1}+1/2} =+\infty.$$ 
 \end{prop}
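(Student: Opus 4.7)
The plan is to exploit the explicit recursive structure \eqref{EqGenTildeX1} of the scheme: conditionally on $\widehat X_{t_k}=x_k^i$, the random variable $\widetilde X_{t_{k+1}}=\mathcal E_k(\widehat X_{t_k},Z_{k+1})$ is an affine function of the independent standard Gaussian $Z_{k+1}$, hence itself Gaussian with explicit mean and variance. The transition probability will then reduce to the probability that this conditional Gaussian lies in the Voronoi cell $C_j(\Gamma_{k+1})$, which, in dimension one, is an interval bounded by midpoints of consecutive codebook points; this is exactly a difference of two values of $\Phi$.

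First I would translate the defining events in terms of the Voronoi projection. Since $\widehat X_{t_\ell}={\rm Proj}_{\Gamma_\ell}(\widetilde X_{t_\ell})$, one has $\{\widetilde X_{t_\ell}\in C_i(\Gamma_\ell)\}=\{\widehat X_{t_\ell}=x_\ell^i\}$ up to the negligible boundary of the partition, so that
\[
\hat p_k^{ij}=\mathbb P\bigl(\widetilde X_{t_{k+1}}\in C_j(\Gamma_{k+1})\,\big|\,\widehat X_{t_k}=x_k^i\bigr).
\]
Using that $Z_{k+1}$ is independent of $\widetilde X_{t_k}$, and therefore of $\widehat X_{t_k}$, the conditional law of $\widetilde X_{t_{k+1}}$ given $\{\widehat X_{t_k}=x_k^i\}$ is $\mathcal N\bigl(m_k(x_k^i),\,v_k(x_k^i)^2\bigr)$ with $m_k$ and $v_k$ as stated in the proposition.

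Next I would invoke the standard one-dimensional description of Voronoi cells: after reordering $\Gamma_{k+1}=\{x_{k+1}^1<\cdots<x_{k+1}^{N_{k+1}}\}$ increasingly, the cell $C_j(\Gamma_{k+1})$ coincides (up to boundary points) with the interval $[x_{k+1}^{j-1/2},x_{k+1}^{j+1/2}]$, with the convention $x_{k+1}^{1/2}=-\infty$ and $x_{k+1}^{N_{k+1}+1/2}=+\infty$ to absorb the extremal cells. Plugging in the Gaussian conditional law and standardizing by $z=(u-m_k(x_k^i))/v_k(x_k^i)$ yields
\[
\hat p_k^{ij}=\Phi\!\left(\tfrac{x_{k+1}^{j+1/2}-m_k(x_k^i)}{v_k(x_k^i)}\right)-\Phi\!\left(\tfrac{x_{k+1}^{j-1/2}-m_k(x_k^i)}{v_k(x_k^i)}\right),
\]
which is the claimed identity \eqref{eq:transiprob} upon recognizing the arguments as $x_{k+1,j+}(x_k^i)$ and $x_{k+1,j-}(x_k^i)$.

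The whole argument is essentially a Gaussian standardization combined with a description of one-dimensional Voronoi cells, so I do not expect any substantial obstacle. The only minor points to check carefully are that the boundaries of Voronoi cells carry zero Gaussian mass (so open versus closed intervals give the same probability), and that the infinite endpoint convention correctly handles the extremal indices $j=1$ and $j=N_{k+1}$ without any special case treatment.
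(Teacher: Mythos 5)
Your argument is correct and is precisely the standard derivation behind this formula: the paper itself states the proposition without proof (recalling it from the recursive-quantization reference \cite{PagSagMQ}), and the intended justification is exactly what you give — conditionally on $\{\widehat X_{t_k}=x_k^i\}$ (an event measurable with respect to $(\bar X_0,Z_1,\dots,Z_k)$, hence independent of $Z_{k+1}$), $\widetilde X_{t_{k+1}}=\mathcal E_k(x_k^i,Z_{k+1})$ is Gaussian with mean $m_k(x_k^i)$ and standard deviation $v_k(x_k^i)$, and the one-dimensional Voronoi cell $C_j(\Gamma_{k+1})$ is the interval between consecutive midpoints, whose boundary is Lebesgue-negligible and hence carries no mass. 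Nothing is missing.
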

Once we have access to the marginal quantizations and to its associated transition probabilities, the right hand side of \eqref{EqFiniteApproxForm} can be computed explicitly. 

%the quantity $\pi_n$ is approximated  by $\hat \pi_n$,  computed from the recursive formula:
%\begin{equation}
%\left \{
%\begin{array}{ll}
%\widehat{\pi}_{0} = \widehat H_{0} \\
%\widehat{\pi}_{k} = \widehat{\pi}_{k-1} \widehat H_{k}:=\Big[   \sum_{i=1}^{N_{k-1}} \widehat H_{k}^{i,j} \widehat{\pi}_{k-1}^{i}  \Big]_{j=1,\dots,N_k},\quad  k=1,\dots,n
%\end{array}
%\right.
%\end{equation}
%where 
%$$ \widehat{H}_{k}  = \sum_{j=1}^{N_k} \widehat{H}_{k}^{ij} \delta_{x_{k-1}^i}, $$
%with 
%\begin{equation}  \label{EqDefHIJNL}
% \widehat{H}_{k}^{ij} = g_{k-1}(x^i_{k-1},  x^j_k)  \,  \hat{p}_k^{ij}, \quad i=1,\dots,N_{k-1}; j=1,\dots,N_k
% \end{equation}
%and where the $\hat{p}_{k}^{ij}$'s correspond to the estimation of the transition probabilities from   $\widehat{X}_{t_{k-1}} = x_{k-1}^i$ to $\widehat{X}_{t_{k}} = x_k^j$:
%\begin{equation} \label{transition_weightsNL} 
%\hat{p}_k^{ij} = \mathbb{P} (\widehat{X}_{t_k} = x_k^j  \vert \widehat{X}_{t_{k-1}} = x_{k-1}^i),  \quad  i=1,\dots,N_{k-1}; j=1,\dots,N_{k}.
%\end{equation}

\subsection{The error analysis}
Our aim in this section is to investigate the  error   resulting from the approximation of $\Pi_{y,m} \bar F(t_m, t_n,\cdot) = \big[\Pi_{y,m} \big(\pi_{n,m} \mbox{\bf 1}\big)\big](\cdot)$ by   $\big[ \widehat{\Pi}_{y,m}  \big(\hat{\pi}_{n,m} \mbox{\bf 1}\big)\big](\cdot)$. This error is an aggregation of three terms (see the proof of Theorem \ref{ThmErrorBounds}) involving the approximation errors $\vert \Pi_{y,m} \bar F(t_m, t_n,\cdot) - \widehat{\Pi}_{y,m} \bar F(t_m, t_n,\cdot) \vert$ and $\vert \big(\pi_{n,m} \mbox{\bf 1}\big)(x) -  \big(\hat{\pi}_{n,m} \mbox{\bf 1}\big)(x)\vert$, $x \in \mathbb R$. The two following results give the  bounds associated two the former approximation errors. Both are  (carefully) adjustments of Theorem 4.1. and Lemma 4.1. in \cite{PagPha} to our context so that we refer to the former paper for their detailed proofs.

  %  \begin{itemize}
    
  %  \item[(A1) ] The transition  operators  $P_k(x,dy) = \mathbb P(\bar X_k \in dy \vert \bar X_{k-1}=x),\ k=1,\cdots,m$   are Lipschitz and we set
  %  $$ [P]_{\rm Lip} := \max_{k=1,\cdots,n}[P_k]_{\rm Lip} <+\infty. $$
  %  Recall that a transition probability $Q$ is $C$-Lipschitz if for every $[\varphi]_{\rm Lip}$-Lipschitz function $\varphi$, $Q \varphi$ is Lipschitz and $[Q\varphi]_{\rm Lip} \le C [\varphi]_{\rm Lip}$. Then, the Lipschitz ratio $[Q]_{\rm Lip}$ is defined as
  %  \[
  %  [Q]_{\rm Lip} = \sup\big\{ [Q\varphi]_{\rm Lip} / [\varphi]_{\rm Lip}, \, \varphi \mbox{ is a non zero Lipschitz function}  \big\}.
  %  \]
  %  \item [(A2)] (i) For every $k=1,\cdots,n$, the functions $g_k$  are bounded and  we set
   %     $$ {\rm K}^n_g := \max_{k=1,\cdots,n} \Vert  g_{k}  \Vert_{\infty},$$
  %      where $\Vert  g_{k}  \Vert_{\infty}$ is the supremum norm of the functions $g_k$.
 %   \item[(ii)]  For every $k=1,\cdots,n$, there exists  %$[g^1_{k}]_{Lip}$ and $[g^2_{k}]_{Lip}$  so that for every %$x,x',\widehat{x},\widehat{x}' \in \mathbb R^d$,
%$$  \vert g_{k}(x,x') - g_{k}(\widehat{x},\widehat{x}') \vert  \leq [g^1_{k}]_{Lip}\ \vert x-\widehat{x} \vert + [g^2_{k}]_{Lip} \ \vert x'-\widehat{x}' \vert .$$
  %  \end{itemize}

\begin{thm} \label{ThmConvergence}    Suppose that Assumption ({\rm Lip}) holds true. Then, for any bounded  Lipschitz continuous function $f$ on $\mathds R^d$ we  have,
\begin{eqnarray*}
\vert   \Pi_{y,m}  f - \widehat{\Pi}_{y,m} f   \vert  &  \leq & \frac{ K^m_g }{\phi_m \vee  \hat{\phi}_m}  \sum_{k=0}^m  A_k^m(f,y) \ \Vert \bar{X}_{t_k} - \widehat{X}^{\Gamma_k}_{t_k}  \Vert_{_2},  
\end{eqnarray*}
%where $m$  corresponds  to  the dimension of the  vector $y$, $n-m$ stands for the size of the time discretization grid for the Euler scheme between $s$ and  $t$ and $M$ is the number of Monte Carlo samples.  Furthermore, 
where 
$$ \phi_{m} := \pi_{y,m} \mbox{\bf 1},  \quad \widehat{\phi}_{m} := \widehat{\pi}_{y,m}  \mbox{\bf 1}, $$  
\begin{eqnarray*}
A_k^m(f,y) & := &   2  \frac{\Vert f\Vert_{\infty}}{K_g^m} [g_{k}^2]_{\rm Lip}(y_{k-1},y_k)  +   2  \frac{\Vert f\Vert_{\infty}}{K_g^m}  \sum_{j=k+1}^m  [{\cal E}]_{Lip}^{j-k-1} \Big([g_{j}^1]_{\rm Lip}(y_{j-1},y_j) \\
&& \hspace{6.5cm} + \, [{\cal E}]_{\rm Lip} [g_{j}^2]_{Lip}(y_{j-1},y_j) \Big),
\end{eqnarray*}
and, for every $k \in \{1, \ldots,m\}$, $[g^1_k]_{\rm Lip}(y,y')$ and $[g^2_k]_{\rm Lip}(y,y')$ are such that  for every $x,x', \hat{x}, \hat{x}' \in \mathbb R^d$,
$$  \vert g_{k}^a(x,y;x',y') - g_{k}^a(\widehat{x},y;\widehat{x}',y') \vert  \leq [g^1_{k}]_{\rm Lip}(y,y')\ \vert x-\widehat{x} \vert + [g^2_{k}]_{\rm Lip}(y,y') \ \vert x'-\widehat{x}' \vert.$$
The quantities $K_g$ and $[{\cal E}]_{\rm Lip}$ are defined as
\[
K_g = \max_{k=1,\cdots,m} \Vert  g_{k}^a  \Vert_{\infty} \qquad \mbox{ and } \qquad [{\cal E}]_{\rm Lip} = \max_{k=1, \ldots, m} [{\cal E}_k]_{\rm Lip}.
\]
\end{thm}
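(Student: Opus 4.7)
My starting point is the standard identity for ratios: writing $\phi_m = \pi_{y,m}\mathbf 1$ and $\widehat{\phi}_m = \widehat{\pi}_{y,m}\mathbf 1$, I would decompose
\[
\Pi_{y,m} f - \widehat{\Pi}_{y,m} f
= \frac{\pi_{y,m} f - \widehat{\pi}_{y,m} f}{\widehat{\phi}_m}
+ \Pi_{y,m} f \cdot \frac{\widehat{\phi}_m - \phi_m}{\widehat{\phi}_m}.
\]
Taking absolute values, using $|\Pi_{y,m} f|\le \|f\|_\infty$ and symmetrizing the denominator to $\phi_m\vee\widehat{\phi}_m$, the task reduces to controlling the two unnormalized differences $|\pi_{y,m} f - \widehat{\pi}_{y,m} f|$ and $|\pi_{y,m} \mathbf 1 - \widehat{\pi}_{y,m} \mathbf 1|$ by a single expression of the prescribed form.

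Next I would exploit the recursive structure $\pi_{y,k} = \pi_{y,k-1} H_{y,k}$ and its quantized analogue $\widehat\pi_{y,k} = \widehat\pi_{y,k-1}\widehat H_{y,k}$. A one-step telescoping gives
\[
\pi_{y,k} f - \widehat\pi_{y,k} f
= \pi_{y,k-1}(H_{y,k} f - \widehat H_{y,k} f) + (\pi_{y,k-1}-\widehat\pi_{y,k-1})\widehat H_{y,k} f,
\]
so that iterating from $k=m$ down to $k=0$ yields a sum of contributions indexed by the step at which the quantization error is introduced. The quantity $H_{y,k} f(x) - \widehat H_{y,k} f(x)$ is an expectation of $g_{k-1}^{\bf a}(x,y_{k-1};\bar X_{t_k},y_k)f(\bar X_{t_k}) - g_{k-1}^{\bf a}(x,y_{k-1};\widehat X_{t_k},y_k)f(\widehat X_{t_k})$, which is where the Lipschitz constants $[g_{k}^1]_{\rm Lip}(y_{k-1},y_k)$ and $[g_{k}^2]_{\rm Lip}(y_{k-1},y_k)$ (together with $\|f\|_\infty$) come in, yielding a bound in terms of $\|\bar X_{t_k}-\widehat X_{t_k}^{\Gamma_k}\|_{_2}$. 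Iterated compositions of the kernel introduce factors of $[\mathcal E]_{\rm Lip}$ under assumption $({\rm Lip})$, producing the geometric factor $[\mathcal E]_{\rm Lip}^{j-k-1}$ that appears in $A_k^m(f,y)$. A uniform bound $\|g_k^{\bf a}\|_\infty \le K_g$ is used whenever the kernel is passed through in the telescope without being differenced, accounting for the prefactor $K_g^m$.

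Applying the same argument to $f\equiv\mathbf 1$ yields the companion bound for $|\phi_m-\widehat\phi_m|$, with the same Lipschitz constants $[g_k^1]_{\rm Lip}, [g_k^2]_{\rm Lip}$ since only the spatial variables $x,x'$ are perturbed. Plugging both bounds into the ratio decomposition and collecting terms produces exactly the stated coefficients $A_k^m(f,y)$ (the factor $2$ comes from aggregating the numerator and normalization contributions), the prefactor $K_g^m/(\phi_m\vee\widehat\phi_m)$, and the error terms $\|\bar X_{t_k}-\widehat X_{t_k}^{\Gamma_k}\|_{_2}$. The main obstacle, as in Theorem~4.1 of \cite{PagPha}, is bookkeeping: one must carefully track how the Lipschitz constant of the single step $H_{y,j}$ is amplified by successive applications of the remaining kernels while not double-counting the uniform norm $K_g$; since our kernel $g_k^{\bf a}$ incorporates the extra factor $G_{\Delta_k\sigma_k^2}^{x_k,x_{k+1}}({\bf a})\le 1$, the Lipschitz constants $[g_k^1]_{\rm Lip}, [g_k^2]_{\rm Lip}$ must be taken for the full product $g_k^{\bf a}$, which only affects the bookkeeping and not the structure of the estimate. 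Once this is checked, the conclusion follows directly from the argument of \cite{PagPha}, which is why we are content to refer to it for the detailed computations.
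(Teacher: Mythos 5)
Your plan is correct and coincides with the paper's own route: the paper gives no detailed proof but explicitly defers to Theorem 4.1 and Lemma 4.1 of Pag\`es--Pham, and your ratio decomposition with the symmetrized denominator $\phi_m\vee\widehat\phi_m$, followed by telescoping the recursion $\pi_{y,k}=\pi_{y,k-1}H_{y,k}$ and propagating Lipschitz constants through the kernels via $({\rm Lip})$, is exactly that argument adapted to the kernel $g_k^{\bf a}$. The only point to watch in the bookkeeping is that differencing $f(\bar X_{t_m})-f(\widehat X_{t_m})$ at the final step naturally produces an additional $[f]_{\rm Lip}$ contribution as in the cited reference, which the displayed $A_k^m(f,y)$ absorbs/omits; this is a feature of the statement as written, not a gap in your argument.
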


Remark that the existence of $[g^1_k]_{\rm Lip}(y_{k-1},y_k)$ and $[g^2_k]_{\rm Lip}(y_{k-1},y_k)$  is guaranteed by the fact that the function $g_{k}^a(x,y;x',y')$ is Lipschitz with respect to $(x,x')$.
\medskip

Let us give now the error bound associated to the approximation of  $\pi_{n,m} \mbox{\bf 1}$.

\begin{prop}  \label{ProTh;conver} Let  $y = (y_0, \ldots, y_m) \in (\mathbb R^q)^{m+1}$. Then,  we have for any $x \in \mathbb R$
\begin{equation}. \label{EqQuantErrorPi}
    \big\vert \big(\pi_{n,m} \mbox{\bf 1} \big)(x) -  \big(\hat{\pi}_{n,m} \mbox{\bf 1} \big)(x) \big\vert \le \sum_{k=m+1}^n B_k (G)\ \Vert \bar{X}_{t_k} - \widehat{X}^{\Gamma_k}_{t_k}  \Vert_{_2}
\end{equation}
where 
\[
B_k(G) = \Lambda^{n-m-1} \Big([G_m^1]_{\rm Lip}\, \delta_{_{\{k=m \}}} + \big([G_k^1]_{\rm Lip} \vee [G_{k-1}^2]_{\rm Lip}\big) \,\delta_{_{\{k \in \{ m+1, \ldots, n-1\}\}}} + [G_n^2]_{\rm Lip}\, \delta_{_{\{k=n}\}} \Big)
\]
with 
\[
\Lambda = \max_{k=m+1, \ldots, n}\big \Vert G_{\Delta_k \sigma_k^2}^{({\tiny \bullet},{\tiny \bullet})}({\bf a}) \big \Vert_{\infty}
\]
\end{prop}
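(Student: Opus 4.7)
The plan is to derive the bound by a classical telescoping argument on the product $\prod_{k=m}^{n-1} G_{\Delta_k\sigma_k^2}^{\bar X_{t_k},\bar X_{t_{k+1}}}(\mathbf{a})$, following the structure of Lemma 4.1 in \cite{PagPha}. First I would unwind the two recursive definitions of $\pi_{n,m}\mathbf 1$ and $\hat\pi_{n,m}\mathbf 1$ as iterated applications of the kernels $H_k$ and $\hat H_k$. This yields, on the one hand,
\[
(\pi_{n,m}\mathbf 1)(x) = \mathbb E\Bigl[ \prod_{k=m}^{n-1} G_{\Delta_k\sigma_k^2}^{\bar X_{t_k},\bar X_{t_{k+1}}}(\mathbf a) \,\Big|\, \bar X_{t_m} = x \Bigr],
\]
and, on the other, an analogous expression for $(\hat\pi_{n,m}\mathbf 1)(x)$ in which $(\bar X_{t_k})$ is replaced by the quantized chain $(\widehat X_{t_k}^{\Gamma_k})$ (with the discrete transition probabilities $\hat p_k^{ij}$ built in through the recursion). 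The key observation is that the transition weights in the quantized scheme are by construction the exact law of the quantized chain, so the difference between the two expressions is entirely controlled by the replacement $\bar X \rightsquigarrow \widehat X$ inside the $G$-factors.

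Next I would apply the algebraic identity $\prod_k a_k - \prod_k b_k = \sum_k \bigl(\prod_{j<k} a_j\bigr)(a_k - b_k)\bigl(\prod_{j>k} b_j\bigr)$ to rewrite the difference as a sum of $n-m$ terms, in each of which exactly one factor is a \emph{difference} $G_{\Delta_k\sigma_k^2}^{\bar X_{t_k},\bar X_{t_{k+1}}}(\mathbf a) - G_{\Delta_k\sigma_k^2}^{\widehat X_{t_k},\widehat X_{t_{k+1}}}(\mathbf a)$ and the remaining $n-m-1$ factors are bounded in sup norm by $\Lambda$. This produces the prefactor $\Lambda^{n-m-1}$. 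Then I would invoke the Lipschitz decomposition
\[
\bigl|G_{\Delta_k\sigma_k^2}^{x,x'}(\mathbf a) - G_{\Delta_k\sigma_k^2}^{\hat x,\hat x'}(\mathbf a)\bigr| \le [G_k^1]_{\rm Lip}\,|x-\hat x| + [G_k^2]_{\rm Lip}\,|x'-\hat x'|,
\]
take $L^2$-norms via Jensen's inequality through the (conditional) expectation, and reorganize the resulting double sum by grouping all contributions that multiply the same quantization error $\Vert\bar X_{t_k} - \widehat X_{t_k}\Vert_2$. For an interior index $m+1 \le k \le n-1$, $\bar X_{t_k}$ appears once as the left argument of $G_{k}$ (contributing $[G_k^1]_{\rm Lip}$) and once as the right argument of $G_{k-1}$ (contributing $[G_{k-1}^2]_{\rm Lip}$); by majorizing the sum of these two contributions by their maximum one recovers the coefficient $B_k(G)$. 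At the boundaries $k=m$ and $k=n$ only one of the two slots is active, giving $[G_m^1]_{\rm Lip}$ and $[G_n^2]_{\rm Lip}$ respectively.

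The part requiring the most care is Step 2: one must verify that after applying the telescoping identity, the residual $G$-factors in each term genuinely involve random states (true or quantized) that are controlled either by $\Lambda$ in sup-norm or by the isolated Lipschitz difference, without creating cross terms between the two dynamics. This is non-trivial because the quantized chain does not evolve under the same transition as $\bar X$, so the rearrangement of conditional expectations must be done so that the mixed-dynamics factors are treated as deterministic-bounded and only the isolated difference carries the dependence on the quantization error. Since this is exactly the adjustment carried out in \cite{PagPha}, I would refer to that paper for the detailed bookkeeping while noting that no additional argument is needed for the specific kernels $H_k$ here because the function $G_{\Delta_k\sigma_k^2}^{\bullet,\bullet}(\mathbf a)$ is bounded and Lipschitz under our standing Lipschitz assumption on $\sigma$.
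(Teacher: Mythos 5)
Your proof follows essentially the same route as the paper's: the closed-form telescoping identity you apply to the product of $G$-factors is exactly the unrolled version of the paper's one-step recursive splitting plus induction, and the subsequent steps (bounding the residual factors by $\Lambda^{n-m-1}$, the Lipschitz decomposition in each argument, regrouping contributions per quantization error, and passing to $L^2$-norms through the conditional expectation) match the paper's argument. The subtlety you flag about the quantized chain having its own transition law is real and is handled in the paper just as you propose, by deferring the bookkeeping to the cited reference.
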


\begin{proof}
Recall that 
$$\big(\pi_{n,m} \mbox{\bf 1}\big)(x)=  \mathbb E \Big( \Lambda_m (\bar X_{t_{m:n}})  \big\vert\, \bar{X}_{t_m} =x \Big) $$
where for every $k \ge   m$,
\[
\Lambda_m (\bar X_{t_{m:k}}) := \prod_{\ell=m}^{k-1} G_{\Delta_k\sigma_k^2}^{\bar X_{t_{k}}, \bar X_{t_{k+1}}}({\bf{a}})
\]
with the convention that $\Lambda_m (\bar X_{t_{m:m}}) =1 $. 
Now, we have for any $k \ge m
$,
\begin{eqnarray*}
\Lambda_m (\bar X_{t_{m:k}}) - \Lambda_m (\hat X_{t_{m:k}}) & = & \Big( G_{\Delta_{k-1}\sigma_{k-1}^2}^{\bar X_{t_{k-1}}, \bar X_{t_{k}}}({\bf{a}}) -  G_{\Delta_{k-1}\sigma_{k-1}^2}^{\hat X_{t_{k-1}}, \hat X_{t_{k}}}({\bf{a}})\Big) \Lambda_m (\bar X_{t_{m:k-1}}) \\
  &&  \ + \ G_{\Delta_{k-1}\sigma_{k-1}^2}^{\hat X_{t_{k-1}}, \hat X_{t_{k}}}({\bf{a}}) \big(\Lambda_m (\bar X_{t_{m:k-1}})  - \Lambda_m (\hat X_{t_{m:k-1}})\big).
\end{eqnarray*}
Since the function $G_{\Delta_k \sigma_k^2}^{({\tiny \bullet},{\tiny \bullet})}({\bf a})$ is Lipschitz and bounded and that for any  $k \ge m+1$, $\Lambda_m (\bar X_{t_{m:k-1}}) \le \Lambda^{k-m-1}$, we have 
\begin{eqnarray*}
\vert \Lambda_m (\bar X_{t_{m:k}}) - \Lambda_m (\hat X_{t_{m:k}}) \vert &\le& \Big([G_k^1]_{\rm Lip} \vert \bar X_{t_{k-1}} - \hat X_{t_{k-1}}  \vert  + [G_k^2]_{\rm Lip} \vert \bar X_{t_{k}} - \hat X_{t_{k}}  \vert\Big) \Lambda^{k-m-1} \\
 && \ + \ \Lambda \vert \Lambda_m (\bar X_{t_{m:k-1}}) - \Lambda_m (\hat X_{t_{m:k-1}})  \vert. 
\end{eqnarray*}
Keeping in mind that $\Lambda_m (\bar X_{t_{m:m}}) = \Lambda_m (\hat X_{t_{m:m}}) = 1$, we deduce from an induction on $k$ that 
\[
\vert \Lambda_m (\bar X_{t_{m:n}}) - \Lambda_m (\hat X_{t_{m:n}}) \vert \le \Lambda^{n-m-1} \sum_{k=m+1}^n [G_k^1]_{\rm Lip} \vert \bar X_{t_{k-1}} - \hat X_{t_{k-1}}  \vert + [G_k^2]_{\rm Lip} \vert \bar X_{t_{k}} - \hat X_{t_{k}}  \vert. 
\]
The result follows by noting that  
 \begin{eqnarray*}
 \big\vert \big(\pi_{n,m} \mbox{\bf 1} \big)(x) -  \big(\hat{\pi}_{n,m} \mbox{\bf 1} \big)(x) \big\vert & \le & \mathbb E \big( \vert \Lambda_m (\bar X_{t_{m:n}}) - \Lambda_m (\hat X_{t_{m:n}})  \vert\,  \big\vert \bar X_{t_m} = x\big) \\
 & \le & \Lambda^{n-m-1} \sum_{k=m+1}^n [G_k^1]_{\rm Lip} \vert \bar X_{t_{k-1}} - \hat X_{t_{k-1}}  \vert + [G_k^2]_{\rm Lip} \vert \bar X_{t_{k}} - \hat X_{t_{k}}  \vert
 \end{eqnarray*}
 and by using the non-decreasing property of the $L^p$-norm.
\end{proof}
%\section{Applications to quantitative finance}

We may deduce now the global error induced by our procedure, means,   the error deriving from the estimation of  $$ \Pi_{y,m} F(t_m,t_n,\cdot) = \mathbb P\big(\tau_X>t_n  \vert\,  (Y_{t_0},\dots, Y_{t_m})  = y \big) $$ by Equation  \eqref{EqFiniteApproxForm}.  To this end, we need the following additional assumptions which will be used to compute (see \cite{GobThese})  the convergence rate  of the quantity  $\mathbb E \big\vert  \mathds{1}_{\{ \tau_{\bar X}>t \}} - \mathds {1}_{\{ \tau_X>t \}} \big\vert$ towards   $0$. We suppose that the diffusion is homogeneous and   
\begin{itemize}
\item [(\textbf {H1})] $b$ is a $\mathcal C_b^{\infty}(\mathbb R)$ function and $\sigma$ is in $\mathcal C_b^{\infty}(\mathbb R)$.
\item [(\textbf {H2})] There exists $\sigma_0>0$ such that $\forall x \in \mathbb R, \sigma(x)^2 \ge \sigma_0^2  $ (\emph {uniform ellipticity}).
%\item [(\textbf {H3})]  $\ \mathbb P_x \left( \inf_{t \in [0,T]} X_t = a \right) =0$.

\end{itemize}

%Consider that
%$$  \tau^{X} = \inf\{u \geq0, X_u \not\in D  \}$$
%where $D=({\bf a},+\infty)$ and $X$ is the signal process. Let $(\bar{X}_{t_k})_{k=0,\dots,m}$ be the continuous Euler process taken at discrete times $t_k,  k=0,\dots,m$ and 
%$$ \tau^{\bar X}= \inf\{u \geq0, \bar{X}_u \not\in D  \}.$$
We have the following result.

\begin{prop}[See \cite{GobThese}]  \label{PropGob}
Let   $t>0$.  Suppose that Assumptions  (\textbf{H1}) and (\textbf{H2}) are fulfilled. Then,   for every $\eta \in (0,\frac{1}{2}[$   there exists an increasing function $K(T)$ such that for every $t \in [0,T]$ and for every $x \in \mathbb R$,
$$
\mathbb E_x \left[\big\vert  \mathds{1}_{\{\tau_X > t \}}  - \mathds{1}_{\{\tau_{\bar X}> t\}}  \big\vert \right] \le \frac{1}{ n^{\frac{1}{2}-\eta}} \frac{K(T)}{\sqrt{t}},
$$
 where  $n$ is the number of discretization time steps over $[0,t]$. 
\end{prop}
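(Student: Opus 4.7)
The plan is to adapt the strategy of \cite{GobThese}, which controls the weak error between first-passage times of a diffusion and of its continuous Euler scheme under assumptions of the (H1)--(H2) type.

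First, I would rewrite the $L^{1}$-error as a symmetric-difference probability
\[
\mathbb E_x \big[\big\vert \mathds{1}_{\{\tau_X>t\}} - \mathds{1}_{\{\tau_{\bar X}>t\}}\big\vert\big] = \mathbb P_x\big(\tau_X>t,\, \tau_{\bar X}\le t\big) + \mathbb P_x\big(\tau_X\le t,\, \tau_{\bar X}>t\big),
\]
so that the problem reduces to bounding the probability that the two processes disagree on whether $\bf{a}$ has been crossed by time $t$.

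Next, I would localize each mismatch to the discretization interval $[t_k,t_{k+1}]$ on which the disputed crossing occurs. Conditioning on $\mathcal F_{t_k}$ and combining the explicit Brownian-bridge law for $\min_{[t_k,t_{k+1}]}\bar X$ given $(\bar X_{t_k},\bar X_{t_{k+1}})$ (the very formula that gives rise to \eqref{EqDefOfG}) with the strong $L^{2}$-convergence $\|X_{t_k}-\bar X_{t_k}\|_{_2}=O(n^{-1/2})$ of the Euler scheme, each per-interval contribution can be controlled by the probability that $X_{t_k}$ falls into a tube of width $O(n^{-1/2+\eta})$ around $\bf{a}$. Summing over $k=0,\ldots,n-1$ and using the standard estimate for the transition density of $X_s$ conditional on non-absorption yields a bound proportional to $n^{-1/2+\eta}$ times an integral of the form $\int_0^t \frac{ds}{\sqrt{s(t-s)}}$, which accounts for the $t^{-1/2}$ blow-up and the time-dependent constant $K(T)$.

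The hard part is the boundary analysis: producing a tube estimate uniform in the starting point $x$ requires two-sided Gaussian bounds on the transition density of the killed diffusion, and these are precisely what (H1) (smoothness of $b,\sigma$) and (H2) (uniform ellipticity) deliver through the Malliavin-calculus machinery developed in \cite{GobThese}. I would import those density bounds as a black box rather than reprove them here. The arbitrarily small $\eta>0$ loss in the exponent absorbs the unavoidable logarithmic factor that appears when one insists on such estimates being uniform as $x$ approaches the boundary $\bf{a}$, which is ultimately what prevents one from reaching the optimal rate $n^{-1/2}$.
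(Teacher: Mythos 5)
The paper does not actually prove this proposition: it is imported verbatim from \cite{GobThese} (note the ``See \cite{GobThese}'' in the statement), so there is no in-paper argument to match yours against. Your outline does follow the structure of the argument in that reference --- symmetric-difference decomposition, localization to the discretization intervals, Brownian-bridge law for the interpolated minimum, a tube estimate near the barrier whose width must dominate the strong-error fluctuations of the Euler scheme, and Gaussian bounds on the (killed) transition density obtained by Malliavin calculus under \textbf{(H1)}--\textbf{(H2)}. In that sense you are reconstructing the right proof, and deferring the density bounds to \cite{GobThese} is exactly what the authors do (for the whole statement).

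Two concrete issues if your sketch is meant to stand as a proof. First, the claim that the factor $t^{-1/2}$ comes from ``an integral of the form $\int_0^t \frac{ds}{\sqrt{s(t-s)}}$'' cannot be right: that integral equals $\pi$ for every $t>0$ (substitute $s=tu$), so it is $t$-independent and produces no blow-up; the $1/\sqrt{t}$ in the bound comes instead from the Gaussian upper bound on the one-dimensional transition density, $\mathbb P_x\big(X_u\in[\mathbf{a},\mathbf{a}+\varepsilon]\big)\le C\varepsilon/\sqrt{u}$, applied at the relevant times. Second, the central quantitative step is only asserted: to reduce a disagreement on $[t_k,t_{k+1}]$ to ``$X$ enters a tube of width $n^{-1/2+\eta}$ around $\mathbf{a}$'' you must first dispose of the complementary event $\{\sup_k|X_{t_k}-\bar X_{t_k}|>n^{-1/2+\eta}\}$, and the $L^2$ rate $O(n^{-1/2})$ alone does not do this --- via Markov's inequality it only gives $O(n^{-2\eta})$, which is far too weak. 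One needs $L^p$ strong-error estimates for large $p$ (or exponential/Bernstein-type concentration) so that this event has probability $o(n^{-1/2})$; that is precisely where the $\eta$-loss in the exponent originates, rather than from a boundary-uniformity issue in the density estimates as you suggest. With those two points repaired, the sketch is a faithful summary of the cited proof.
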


\begin{thm} \label{ThmErrorBounds} 
Suppose   that the coefficients $b$ and $\sigma$ of the continuous signal process $X$ are such that Assumptions    \textbf{(H1)} and  \textbf{(H2)} are satisfied and let $\eta \in (0,\frac{1}{2}]$. We also suppose that Assumption ({\rm Lip}) holds. Then,  for any bounded Lipschitz continuous function $f:\mathbb R^d \mapsto \mathbb R$ and for any fixed observation $y  = (y_0, \ldots,y_m)$ we have 
\begin{eqnarray*}
\Big \vert \mathbb P\big(\tau_X>t_n \vert  (\bar Y_{t_0},\dots,\bar Y_{t_m})=y\big) &  - &   \sum_{i=1}^{N_n}  \sum_{j=1}^{N_m}   \widehat{\Pi}^i_{y,m}  \widehat{\pi} _{n,m}\, \delta_{x_m^j}      \Big \vert     \leq   \mathcal O \Big(n^{-\frac{1}{2}+\eta} \Big)    \\ %+  \mathcal O \Big(  \frac{1}{\sqrt{\log N}}   \Big).  
& + &   \sum_{k=0}^n C_k^n(\bar F(s,t,\cdot),y) \ \Vert \bar{X}_{t_k} - \widehat{X}^{\Gamma_k}_{t_k}  \Vert_{_2},
\end{eqnarray*}
where 
$$C_k^n(\bar F(s,t,\cdot),y) = \frac{ K^m_g }{\phi_m \vee  \hat{\phi}_m}   A_k^m(f,y) \, \delta_{\{k \le m \}} + B_k(G) \, \delta_{\{k \ge m+1 \}}$$
and where $K^m_g$, $\phi_m$,  $\hat{\phi}_m$,    $A_k^m(f,y)$ and  $B_k(G)$ are defined in  Theorem \ref{ThmConvergence} and in Proposition \ref{ProTh;conver}.
\end{thm}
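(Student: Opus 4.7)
The plan is to insert two intermediate quantities and split the total error into three pieces, each of which is controlled by one of the three preparatory results already at hand. Denote briefly $\bar F=\bar F(t_m,t_n,\cdot)$ and let $\widehat F$ be its recursive-quantization counterpart; recall from Theorem~\ref{ThmMainResultPhiNLF} combined with~\eqref{EqRedefinePi} that $\mathbb P(\tau_{\bar X}>t_n\mid\mathcal F_s^{\bar Y})=\Pi_{y,m}\bar F$ on the event $\{(\bar Y_{t_0},\ldots,\bar Y_{t_m})=y\}$. Writing $\mathbb P_y$ as shorthand for $\mathbb P(\,\cdot\,\mid(\bar Y_{t_0},\ldots,\bar Y_{t_m})=y)$, I would decompose
\begin{eqnarray*}
\mathbb P_y(\tau_X>t_n)-\widehat\Pi_{y,m}\widehat F
&=& \underbrace{\bigl[\mathbb P_y(\tau_X>t_n)-\mathbb P_y(\tau_{\bar X}>t_n)\bigr]}_{(I)} \\
&& + \underbrace{\bigl[\Pi_{y,m}\bar F-\widehat\Pi_{y,m}\bar F\bigr]}_{(II)}+\underbrace{\bigl[\widehat\Pi_{y,m}(\bar F-\widehat F)\bigr]}_{(III)},
\end{eqnarray*}
corresponding respectively to the Euler discretization error, the filter quantization error, and the survival-function quantization error.

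For $(I)$, dominating the absolute value by $\mathbb E[\,\vert\mathds{1}_{\{\tau_X>t_n\}}-\mathds{1}_{\{\tau_{\bar X}>t_n\}}\vert\mid(\bar Y_{t_0},\ldots,\bar Y_{t_m})=y]$ and appealing to Proposition~\ref{PropGob} under Assumptions~(\textbf{H1})--(\textbf{H2}) produces the $\mathcal O(n^{-1/2+\eta})$ contribution. For $(II)$, I would observe that $\bar F$ is bounded by $1$ and Lipschitz in its spatial argument (by backward induction on $n-m$ using the explicit formula~\eqref{EqDefOfG} and the~(Lip) regularity of the Euler flow); Theorem~\ref{ThmConvergence} then applies with $f=\bar F$ and yields the $k\le m$ part of the stated bound, $\frac{K_g^m}{\phi_m\vee\hat\phi_m}\sum_{k=0}^m A_k^m(\bar F,y)\,\Vert\bar X_{t_k}-\widehat X_{t_k}^{\Gamma_k}\Vert_2$. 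For $(III)$, the crucial observation is that $\widehat\Pi_{y,m}$ is a discrete probability measure on $\Gamma_m$, hence $\vert\widehat\Pi_{y,m}(\bar F-\widehat F)\vert\le\max_{1\le j\le N_m}\vert\bar F(t_m,t_n,x_m^j)-\widehat F(t_m,t_n,x_m^j)\vert$; applying Proposition~\ref{ProTh;conver} pointwise at each $x_m^j$ bounds this maximum by $\sum_{k=m+1}^n B_k(G)\Vert\bar X_{t_k}-\widehat X_{t_k}^{\Gamma_k}\Vert_2$, which is exactly the $k\ge m+1$ part of the bound. Summing the three contributions and collecting the coefficients into the compact form $C_k^n$ of the statement concludes the proof.

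The main obstacle I anticipate is in $(I)$: Proposition~\ref{PropGob} is stated in the \emph{unconditional} form $\mathbb E_x[\,\vert\mathds{1}_{\{\tau_X>t\}}-\mathds{1}_{\{\tau_{\bar X}>t\}}\vert\,]\le K(T)/(n^{1/2-\eta}\sqrt t)$, whereas I need the analogous bound conditional on $\mathcal F_s^{\bar Y}$. A natural remedy is to condition jointly on $(X_0,(\bar Y_{t_0},\ldots,\bar Y_{t_m}))$, invoke Proposition~\ref{PropGob} pathwise, and then integrate out; but because the Brownian motion $W$ driving $X$ also enters the dynamics of $Y$ via~\eqref{EqSignalStatePr}, one must verify that the Gobet rate $n^{-1/2+\eta}$ is preserved after conditioning rather than being degraded. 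A secondary (more routine) point is a quantitative Lipschitz estimate on $\bar F$ with a constant compatible with the coefficients $A_k^m$; this follows by backward induction from~\eqref{EqDefOfG} but would need to be spelled out.
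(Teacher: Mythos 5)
Your proposal matches the paper's proof essentially step for step: the same three-term decomposition (Euler error, filter quantization error, $\bar F$ quantization error), the same appeal to Proposition~\ref{PropGob} for the first term, Theorem~\ref{ThmConvergence} with $f=\bar F$ for the second, and the observation that $\widehat\Pi_{y,m}$ is a probability measure followed by Proposition~\ref{ProTh;conver} for the third. The conditioning subtlety you flag in $(I)$ is genuine but is present in the paper's own argument too, which simply bounds the conditional expectation by $\mathbb E\vert\mathds{1}_{\{\tau_X>t_n\}}-\mathds{1}_{\{\tau_{\bar X}>t_n\}}\vert/\mathbb P\big((\bar Y_{t_0},\dots,\bar Y_{t_m})=y\big)$ before invoking Proposition~\ref{PropGob}, so your treatment is no less careful than the original.
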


\begin{proof}[$\textbf{Proof}$] We have
\begin{eqnarray} \label{EqDecomposError}
 \Big \vert \mathbb P(\tau_X>t_n \vert  (\bar Y_{t_0},\dots,\bar Y_{t_m})=y) &  - &       \sum_{i=1}^{N_n}  \sum_{j=1}^{N_m}   \widehat{\Pi}^i_{y,m}  \widehat{\pi} _{n,m} \, \delta_{x_m^j}    \Big \vert   \nonumber  \\ 
&  \leq & \big \vert \mathbb P(\tau_X>t_n \vert  (\bar Y_{t_0},\dots,\bar Y_{t_m})=y) -  \mathbb P(\tau_{\bar X}>t_n \vert ( \bar Y_{t_0},\dots, \bar Y_{t_m})=y) \big \vert  \nonumber  \label{EqDiscretError}  \\
& &  + \ \big \vert   \Pi_{y,m} \bar F(t_m,t_n,\cdot) - \widehat{\Pi}_{y,m} \bar F (t_m,t_n,\cdot) \big \vert   \nonumber \\
&  & +\  \big\vert  \widehat{\Pi}_{y,m} \bar F (t_m,t_n,\cdot)  -   \widehat{\Pi}_{y,m}  \hat  F(t_m,t_n,\cdot)  \big \vert. 
\end{eqnarray}
Now, we have 
\begin{eqnarray*}
 \big\vert  \widehat{\Pi}_{y,m} \bar F (t_m,t_n,\cdot)  -   \widehat{\Pi}_{y,m}  \hat  F(t_m,t_n,\cdot)  \big \vert & =&  \Big\vert \sum_{i=1}^{N_m} \Big( \widehat{\Pi}_{y,m}^i \bar F (t_m,t_n,x_m^i)  -   \widehat{\Pi}_{y,m}^i  \hat  F(t_m,t_n,x_m^i) \Big)  \Big \vert \\
 & \le & \sup_{x \in \mathbb R}  \big\vert \bar F (t_m,t_n,x)  -   \hat  F(t_m,t_n,x)  \big\vert    \sum_{i=1}^{N_m}  \widehat{\Pi}_{y,m}^i \\
 &  = &  \sup_{x \in \mathbb R}  \big\vert \bar F (t_m,t_n,x)  -   \hat  F(t_m,t_n,x)  \big\vert   \\
& = &  \sup_{x \in \mathbb R}  \big\vert  (\pi_{n,m} \mbox{\bf 1})(x)  -   (\hat   \pi_{n,m} \mbox{\bf 1})(x)  \big\vert.  
\end{eqnarray*}
The error bound for  $\vert  (\pi_{n,m} \mbox{\bf 1})(x)  -   (\hat   \pi_{n,m} \mbox{\bf 1})(x)  \vert$ is independent from $x$ and is given by \eqref{EqQuantErrorPi}. On the other hand we have 
\begin{eqnarray*}
 \big \vert \mathbb P(\tau_X>t_n \vert  (\bar Y_{t_0},\dots,\bar Y_{t_m})=y) &-&  \mathbb P(\tau_{\bar X}>t_n \vert  (\bar Y_{t_0},\dots, \bar Y_{t_m})=y \big \vert  \\ 
&\le& \mathbb E \big( \vert \mathds{1}_{\{\tau_X>t_n\}} - \mathds{1}_{\{\tau_{\bar X}>t_n\}} \vert \big \vert  (\bar Y_{t_0},\dots, \bar Y_{t_m}=y)\big) \\
&\le& \frac{1}{\mathbb P ((\bar Y_{t_0},\dots, \bar Y_{t_m})=y)}\,  \mathbb E  \big \vert \mathds{1}_{\{\tau_X>t_n\}} - \mathds{1}_{\{\tau_{\bar X}>t_n\}}  \big \vert.  
\end{eqnarray*}
We conclude by Proposition \ref{PropGob}.
\end{proof} 

\section{Numerical results}\label{sec:numerics}
We illustrate the numerical part by considering a Black-Scholes model. More specifically, the dynamics of the signal process $X$ and observation process $Y$ are in the form

\begin{equation} \label{EqSignalStatePr1}
 \begin{cases}
 dX_t = X_t(\mu dt + \sigma dW_t),  &   X_0=x_0, \\
 dY_t = Y_t(\mu dt + \sigma dW_t +  \delta d\widetilde{W}_t), &  Y_0=y_0
 \end{cases} 
 \end{equation}
meaning that
\[
\frac{dY_t}{Y_t} = \frac{dX_t}{X_t} + \delta d\widetilde{W}_t
\]
which can be interpreted as the yields of the observation process $Y$ are noised yields of the signal process $X$ with magnitude $\delta$. Intuitively, in order to deal with CDS option implied volatilities below, we will play with the parameter $\delta$.

\subsection{Comparison of conditional survival probabilities}
Before tackling the CDS examples, we first test the performance of our method in two setups:
\begin{enumerate}
\item[-] By comparing   the function $F(s,t,x)=\mathbb{P}(\inf_{s\leq u\leq t}X_u>{\bf{a}}|X_s=x)$ and its quantized version $\widehat F(s,t,x)$ defined in \eqref{EqEstimationF(t_m,t_n,)}, keeping in mind that in the model \eqref{EqSignalStatePr1}, 
\begin{equation}\label{eq:F_exact}
    F(s,t,x)= \Phi(h_1(x,t-s)) - \left(\frac{{\bf{a}}}{x}\right)^{2\sigma^{-2}(\mu-\sigma^2/2)} \Phi(h_2(x,t-s)),
\end{equation}
where
\begin{eqnarray*}
h_1(x,u) &=& \frac{1}{\sigma\sqrt{u}}\left(\log\left(\frac{x}{{\bf{a}}}\right) + \left(\mu-\frac{1}{2}\sigma^2\right)u\right),\\
h_2(x,u) &=& \frac{1}{\sigma\sqrt{u}}\left(\log\left(\frac{{\bf{a}}}{x}\right) + \left(\mu-\frac{1}{2}\sigma^2\right)u\right)\;.
\end{eqnarray*}
% and $\Phi(\cdot)$ is the cumulative distribution of the standard Gaussian distribution. 
This comparison allows us to test our method given benchmark values.  
 \item[-] By comparing the conditional default probabilities $\mathbb P(\tau_{\bar X} \le t \vert {\cal F}_s^{\bar Y} \vee {\cal F}_s^{\bar H})$ and  $\mathbb P(\tau_{\bar X}\le t \vert {\cal F}_s^{\bar Y})$ respectively estimated by \eqref{EqFiniteApproxForm} and \eqref{eq:FinapproxY}, where $F(s,t,\cdot)$ is computed using the exact formula \eqref{eq:F_exact}. Our aim here is to confirm the impact of the additional information ${\cal F}_s^{\bar H}$ on the conditional probability $\mathbb P(\tau_{\bar X}\le t \vert {\cal F}_s^{\bar Y})$.
\end{enumerate}
 % start by giving a numerical comparison between conditional default probabilities obtained with the reduced-form setup when conditioning with $\mathcal F_s^Y \vee \mathcal F_s^H$ and the ones obtained with the market filtration $F_s^Y$ without taking into account default information.\\
%Note that the function $F(s,t,X_s)=\mathbb{P}(\inf_{s\leq u\leq t}X_u>{\bf{a}}|X_s)$ (which has been estimated by $\Bar{F}(t_m,t_n,\bar{X}_{t_m})$, see e.g. \eqref{EqDefFuncBarF}) belong to the family or first passage time models and has a semi-closed expression in the model \eqref{EqSignalStatePr1} given by
%\begin{equation}\label{eq:F_exact}
%    \mathbb{P}\left(\inf_{s\leq u\leq t}X_u>{\bf{a}}|X_s\right) = %\Phi_0(h_1(X_s,t-s)) - %\left(\frac{{\bf{a}}}{X_s}\right)^{2\sigma^{-2}(\mu-\sigma^2/2)} %\Phi_0(h_2(X_s,t-s)),
%\end{equation}
%where
%\begin{eqnarray*}
%h_1(x,u) &=& \frac{1}{\sigma\sqrt{u}}\left(\log\left(\frac{x}{{\bf{%a}}}\right) + \left(\mu-\frac{1}{2}\sigma^2\right)u\right),\\
%h_2(x,u) &=& \frac{1}{\sigma\sqrt{u}}\left(\log\left(\frac{{\bf{a}}%}{x}\right) + \left(\mu-\frac{1}{2}\sigma^2\right)u\right)\;.
%\end{eqnarray*}
% and $\Phi_0(\cdot)$ is the cumulative distribution of the standard Gaussian distribution previously defined in \eqref{eq:transiprob}.\\
 %In the case where the survival probability function $F(s,t,\cdot)$ has no closed formula, the function $\Bar{F}(t_m,t_n,\cdot)$ has to be estimated. Among numerical methods, this estimation can be done using Monte Carlo simulation (as in \cite{ProSag14}) or via fast quantization using \eqref{EqEstimationF(t_m,t_n,)}. 
 
 Notice that we deal in this paper with a general framework where the signal and the observation processes have no closed formula. Even if in our model both processes have explicit solutions and we may deduce a similar formula to  \eqref{EqExplicitCompSurvivalProba} using these closed formulae (the only change will come from the function $g_k$ which involves the  conditional density of $(X_{t_{k+1}},Y_{t_{k+1}})$ given $(X_{t_{k}},Y_{t_{k}})$), we still consider their associated  Euler schemes processes $\bar X$ and $\bar Y$ in order to stay in the scope of the proposed numerical method.
 
 To compare the functions $F(s,t,\cdot)$  and $\widehat F(s,t,\cdot)$, we choose the following set of parameters (like those of \cite{ProSag14}):$\mu=0.03$, $\sigma=0.09$, $\delta=0.5$, $x_0=y_0=86.3$ and ${\bf a}=76$. Figure \ref{fig:F_bar} shows the convergence of the quantized function $\widehat{F}(t_m,t_n,x)$ toward the exact one $F(t_m,t_n,x)$  with $t_m=1$, $t_n\in[1.1,3]$ and where $x$ is one point, say $x^\star_m$, on the grid $\{x_m^i, i=1, \ldots, N_m\}$ (see equation \eqref{EqEstimationF(t_m,t_n,)}). Once we fix $t_m$, $F(t_m,t_n,x)$ depends on both $t_n$ and $x$. %$\Bar{X}_{t_m}=x^i_m$, $i=1,\ldots,N_n$. 
 Therefore, to show the convergence, we fix $x$ 
 %at one point $x^\star_m$ in the quantization grid 
 and plot both $F(t_m,t_n,x)$ and $\widehat{F}(t_m,t_n,x)$ with respect to $t_n$. The number of discretization points $m$ is set to $50$ and the convergence is achieved by increasing the number of quantization points $N_n$. Since the fixed quantization point $x^\star_m$ can differ when moving $N_n$, the corresponding figures can take different shapes but, we have only to make sure that the convergence is achieved when increasing $N_n$.\\
\begin{figure}[H]
\centering
\subfigure[$N_n=50$]{\includegraphics[width=0.48\columnwidth]{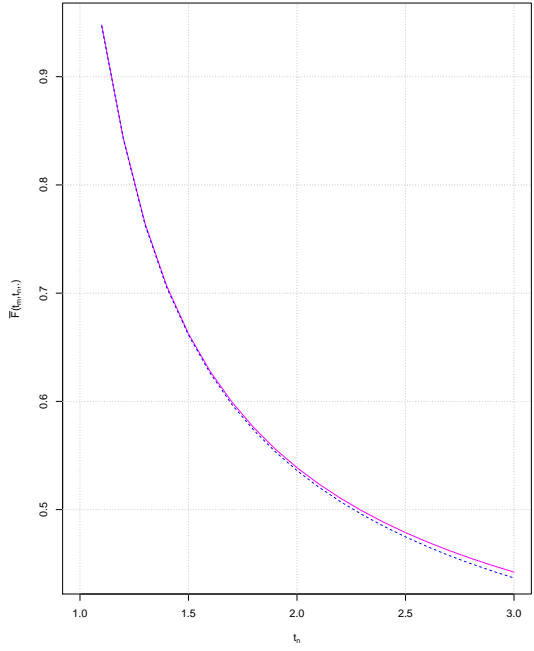}}\hspace{0.2cm}
\subfigure[$N_n=100$]{\includegraphics[width=0.48\columnwidth]{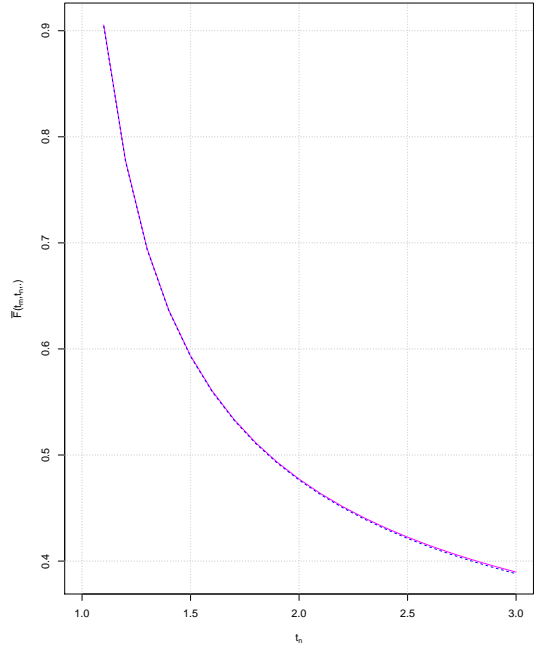}}\hspace{0.2cm}
\subfigure[$N_n=50$]{\includegraphics[width=0.48\columnwidth]{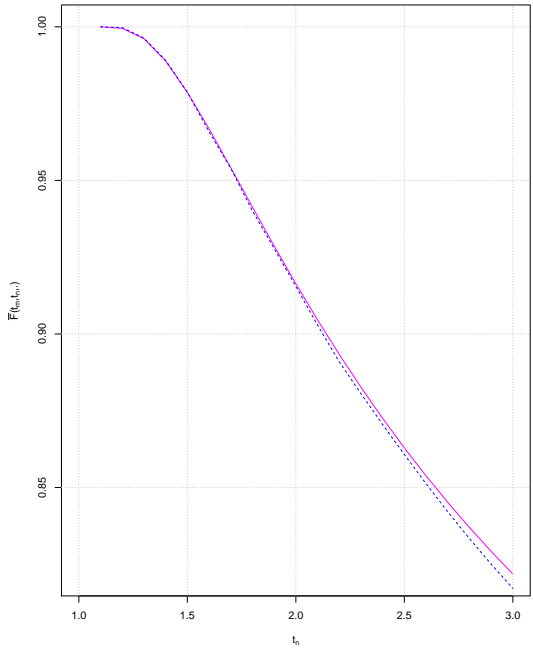}}\hspace{0.2cm}
\subfigure[$N_n=400$]{\includegraphics[width=0.48\columnwidth]{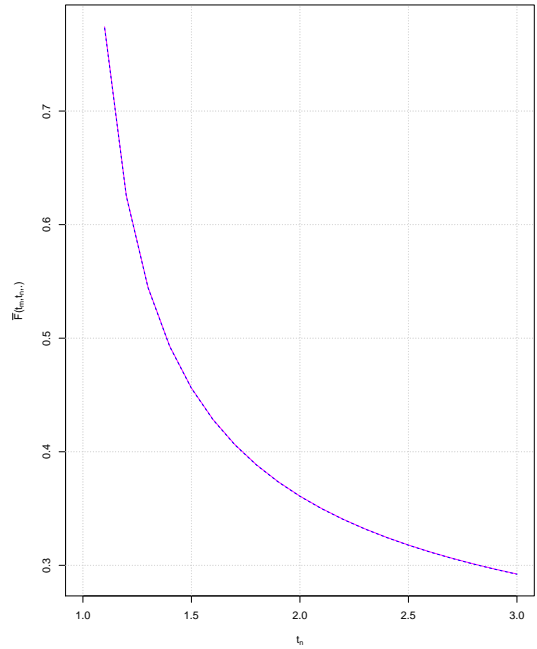}}\hspace{0.2cm}
\caption{Convergence of $\Bar{F}(t_m,t_n,\cdot)$ estimated by $\widehat{F}(t_m,t_n,\cdot)$ (dotted blue) toward $F(t_m,t_n,\cdot)$ (solid magenta).}\label{fig:F_bar} 
\end{figure}

 We now proceed to the numerical comparison between the conditional default probabilities $\mathbb P(\tau_{\bar X} \le t \vert {\cal F}_s^{\bar Y} \vee {\cal F}_s^{\bar H})$ and  $\mathbb P(\tau_{\bar X}\le t \vert {\cal F}_s^{\bar Y})$, respectively estimated by \eqref{EqFiniteApproxForm} and \eqref{eq:FinapproxY}, in order to check the statements of Remark \ref{rem:AbbCal}. 
 %, but with $F(s,t,\cdot)$ computed using the exact formula \eqref{eq:F_exact}. 
 Setting $s=t_m=1$, $t=t_n$ and considering the same parameter set as in the previous figure but with $t_n\in[1.1,11]$, Figure \ref{fig:defaultProb} depicts the trajectories of the observation process $\bar Y$ from $0$ to $t_m$ and the associated conditional default probabilities as a function of $t_n$. First, we notice that equation \eqref{eq:defaultProbComp} is fulfilled as given a trajectory of the observation process $\bar Y$ represented in red, $\mathbb P(\tau_{\bar X}\le t \vert {\cal F}_s^{\bar Y})$ lined up in dots is always above $\mathbb P(\tau_{\bar X} \le t \vert {\cal F}_s^{\bar Y} \vee {\cal F}_s^{\bar H})$ in magenta. Second, the gab between the two quantities is larger for an downward movement of $Y$ compared to an upward movement for which the firm is less exposed to default. This can be understood by the fact that the more the firm is creditworthy, the less the default information is important and the less the default probability is. Then, the model preserves the memory of all the observed path of the process $Y$ when computing default probabilities. This path-dependent future of the default probabilities has already been shown in \cite{CocGemJea} and is known to be very important as it is implicit in reduced-form models for calibration purpose to historical data.

\begin{figure}[H]
\centering
\subfigure[$\bar Y$ Up]{\includegraphics[width=0.48\columnwidth]{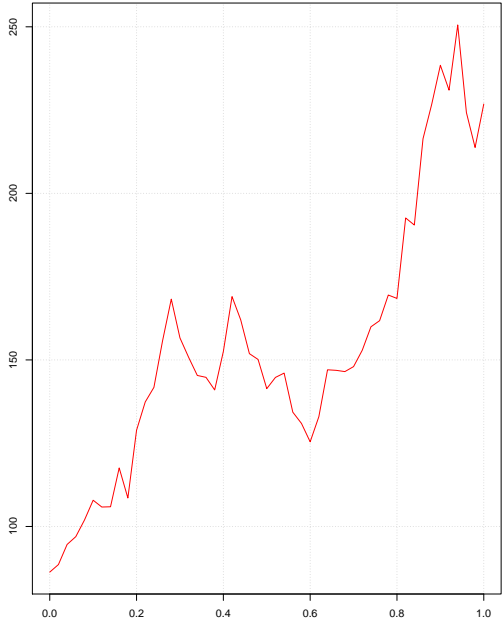}}\hspace{0.02cm}
\subfigure[Default probability]{\includegraphics[width=0.48\columnwidth]{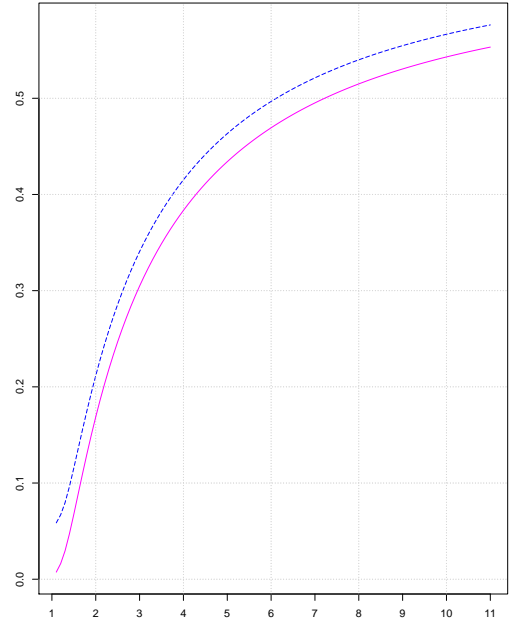}}\hspace{0.02cm}
\subfigure[$\bar Y$ Down]{\includegraphics[width=0.48\columnwidth]{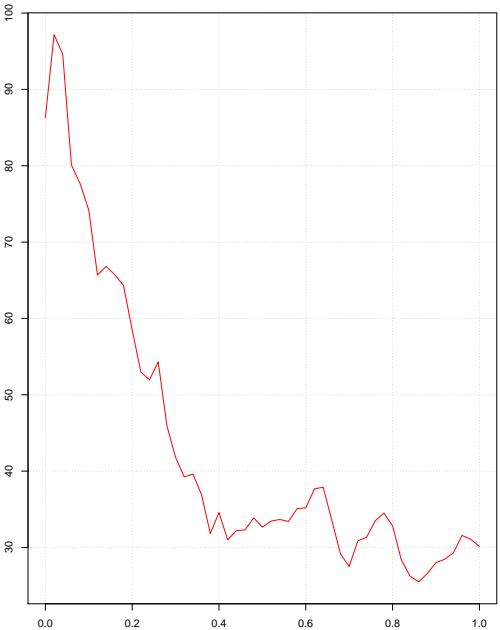}}\hspace{0.02cm}
\subfigure[Default probability]{\includegraphics[width=0.48\columnwidth]{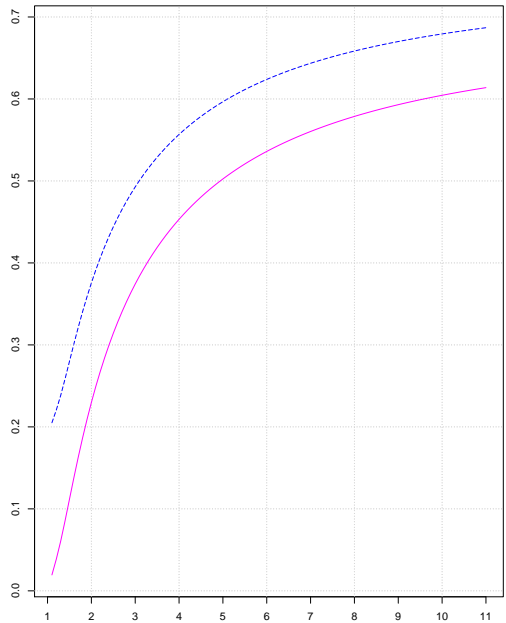}}\hspace{0.02cm}
\caption{Trajectories of the observation process $\bar Y$ (solid red), $Y$ up: panel (a), $\bar Y$ Down: panel (c) and the associated conditional default probabilities functions $\mathbb P(\tau_{\bar X} \le t \vert {\cal F}_s^{\bar Y} \vee {\cal F}_s^{\bar H})$ (solid magenta) and  $\mathbb P(\tau_{\bar X}\le t \vert {\cal F}_s^{\bar Y})$ (dotted blue): panels (b) and (d), with $t_m=1$ and $t_n\in[1.1,11]$ and number of quantization points $N_n=N_m=30$.}\label{fig:defaultProb} 
\end{figure}

\subsection{Application to CDS option pricing}
In this section, we briefly recall the concept and valuation  of credit default swaps and swaptions before analyzing the quantization procedure applied to such models. This will allow to give a full pricing formula of credit swaps derivatives in a firm value approach using partial information theory and optimal quantization. In addition, the fact that we add the default filtration in the model indicating whether default has already taken place or not is very important in this case as it is pointless to price a default swap post-default.\\
A credit default swap (CDS) is an agreement between two counterparties to buy or sell protection against the default risk of a third party called \emph{reference entity}. We set $\tau_X$ as the default time of the latter. In this case, if the contract is signed at time $s$, started at time $T_a$ with maturity $T_b$, the protection buyer pays a coupon (or spread) $k$ at payments dates $T_{a+1},\ldots,T_b$ as long as the reference entity does not default or until $\tau_X$. If the default occurs at time $\tau_X$ with $T_a<\tau_X\leq T_b$, the  protection seller will make a single payment $LGD$ (that we assume to be a known constant) to the protection buyer. A CDS option (CDSO) or default swaption is an option written on a default swap. From this perspective, it requires to recall the no-arbitrage pricing equation of a CDS. The time-$s$ price of a general buyer CDS $CDS_s(a,b,k)$ with unit notional starting at time $T_a$ with maturity $T_b$, $s\leq T_a<T_b$, a spread $k$ and loss given default $LGD$ is given by the difference of the conditional risk-neutral expectations of the protection and the premium discounted cashflows:
\begin{eqnarray}
CDS_s(a,b,k)&=&\mathbb{E}\left[LGD\mathds{1}_{\{T_a< \tau_X\leq T_b\}}D_s(\tau_X)|\mathcal{F}_s\right]\nonumber\\
&&-k\mathbb{E}\left[\left.\sum_{i=a+1}^b \left(\mathds{1}_{\{\tau_X\geq T_i\}}\alpha_i D_t(T_i)+ \mathds{1}_{\{T_{i-1}\leq \tau_X< T_i\}}\alpha_i\frac{\tau_X-T_{i-1}}{T_i-T_{i-1}} D_t(\tau_X)\right)\right|\mathcal{F}_s\right]\nonumber
\end{eqnarray}
with $\alpha_i$ the day count fraction between dates $T_{i-1}$ and $T_i$ which, in a standard CDS, is around $0.25$ (quarterly payment dates) and $D_s(t)=e^{-\int_s^tr(u)du}$ is a time-$s$ discount factor with maturity $t$ and deterministic interest rates $r$. In a reduced-form setup, where $\mathcal F_s := \mathcal F_s^Y \vee \mathcal F_s^H$, this expression can be developed explicitly thanks to the Key lemma:
\begin{equation}
CDS_s(a,b,k)=\mathds{1}_{\{\tau_X >s\}}\left(-LGD\int_{T_a}^{T_b}D_s(u)\partial_uP_s(u)du -k~C_s(a,b)\right)\;,\label{eq:CDS}
\end{equation}
where \begin{equation}
    P_s(t):=\frac{S_s(t)}{S_s(s)}
\end{equation}
and $S_s(t)=\mathbb P \left( \tau_X \ge t  \Big\vert \mathcal F_s^Y\right)$ is known as the \emph{Az\'ema supermartingale} and $C_s(a,b)$ is the risky duration, i.e. the time-$s$ value of the CDS premia paid during the life of the contract when the spread is 1:
\begin{equation}
    C_s(a,b):=\sum_{i=a+1}^b\alpha_iD_s(T_i)P_s(T_i)-\int_{T_{i-1}}^{T_i}\frac{u-T_{i-1}}{T_{i}-T_{i-1}}\alpha_iD_s(u)\partial_uP_s(u)du\;.\nonumber
\end{equation}

The spread which, at time $s$, sets the forward start CDS at 0, called \emph{par spread},  is given by: 
\begin{equation}
    \mathds{1}_{\{\tau_X >s\}}k^\star_s(a,b):=\mathds{1}_{\{\tau_X >s\}}\frac{-LGD\int_{T_a}^{T_b}D_s(u)\partial_uP_s(u)du}{C_s(a,b)}\;.\label{eq:ParCDS}
\end{equation}

The no-arbitrage price of a call option on such a contract at time $s=0$ becomes
\begin{eqnarray}
PSO(a,b,k)&=&\mathbb{E}\left[(CDS_{T_a}(a,b,k))^+D_0(T_a)\right]\nonumber\\&=& D_0(T_a)\mathbb{E}\left[S_{T_a}(T_a)\left( LGD-\sum_{i=a+1}^b\int_{T_{i-1}}^{T_i}g_i(u)D_{T_a}(u)P_{T_a}(u)du\right)^+ \right]\;\label{eq:CDSO}
\end{eqnarray}
where $g_i(u):= LGD(r(u)+\delta_{T_b}(u))+k\frac{\alpha_i }{T_i-T_{i-1}}(1-(u-T_{i-1})r(u))$, with $\delta_{s}(u)$ the Dirac delta function centered at $s$.\\
The random terms inside the expectation \eqref{eq:CDSO} mainly the survival processes $S_{T_a}(\cdot)$ and $P_{T_a}(\cdot)$ are ready to be computed using optimal quantization. To do so, using equations \eqref{EqFiniteApproxForm} and \eqref{eq:FinapproxY}, one only needs to set 
\begin{equation}
  P_{T_a}(u)=\widehat{\Pi}_{y,a} \widehat F(T_a,u,\cdot)\quad  \text{and} \quad S_{T_a}(T_a) =\widehat{\varpi}_{y,a}\mbox{\bf{1}}\;.
\end{equation}
Hence the randomness in the expectation \eqref{eq:CDSO} is only from the observation process $Y$ simulated from time $0$ to $T_a$. This means that we should not need a lot of paths when estimating the expectation \eqref{eq:CDSO} using Monte Carlo simulation after computing the above mentioned survival processes using optimal quantization. This in turn motivates to fully estimate \eqref{eq:CDSO} using a hybrid Monte Carlo-optimal quantization procedure.\medskip

A CDS option has little liquidity but, just like usual equity options, is quoted in term of its Black implied volatility $\bar{\sigma}$ which is based on the assumption that the credit spread follows a geometric Brownian motion.\footnote{Recall that this does not mean in any way that the market naively believes that credit spreads exhibit log-normal dynamics. Market participants simply rely on the Black-Scholes machinery to convert a price into a quantity that is more intuitive to traders, namely implied volatilities.}

The Black formula for payer swaptions at time 0 with maturity $T_a$ is
\begin{equation}
PSO^{Black}(a,b,k,\bar{\sigma}) = C_0(a,b)\left[k^\star_0(a,b)\Phi(d_1)-k\Phi(d_2)\right]\nonumber
\end{equation}
where
\begin{equation}
d_1=\frac{\ln\frac{k^\star_0(a,b)}{k}+\frac{1}{2}\bar{\sigma}^2T_a}{\bar{\sigma}\sqrt{T_a}},\qquad d_2=d_1-\bar{\sigma}\sqrt{T_a}\;.\nonumber
\end{equation}
%and $\Phi$ is the distribution function of a standard Normal random variable.\\

Hence, the CDS option implied volatility $\bar{\sigma}$ can be found by solving the following equation
\begin{equation}
    PSO(a,b,k)=PSO^{Black}(a,b,k,\bar{\sigma})\;.\nonumber
\end{equation}

We now assess the numerical results based on the model's applications to the pricing of CDS option. The model's parameter set is the same as before except here we take $\sigma=5\%$ and $\delta$ is varying. Table \ref{tab:CDSO} shows the estimated values of a European payer CDS option and the corresponding Black's volatilities with different strikes and different values of $\delta$. First, we observe that both CDS option prices and the implied volatilities are increasing with the noise volatility, $\delta$. This can be explained by the fact that, the higher $\delta$, the noisier the observations are and the higher the default probability. Since $\delta$ measures the degree of transparency of the firm, this will have a positive impact on the prices of the CDS option, hence on the corresponding implied volatilities. In contrast, while the option prices are always decreasing with the strike, this is not the case with the implied volatilities except for $\delta=2\%$ and $\delta=3\%$. In the case where $\delta=1\%$, the implied volatility is increasing with respect to the strike. Hence with the help of the parameter $\delta$, one can observe different levels of skewness.

\begin{table}[H]
\centering
\resizebox{12cm}{!}{
\begin{tabular}{|c|c|c|c|c|c|c|}
   \hline
     $k$ (bps) &\multicolumn{3}{c|}{Payer} & \multicolumn{3}{c|}{Implied vol (\%)} \\
   \hline
    &$\delta=1$\% &$\delta=2$\%& $\delta=3$\% & $\delta=1$\% & $\delta=2$\% & $\delta=3$\% \\
   \hline
   52.9 & 0.004655 &  0.007214& 0.009276  & 69.44& 133.84& 196.80\\
   \hline
    66.2& 0.003739 & 0.006077 &0.008032 & 73.36& 124.16& 173.70\\
   \hline
  79.4& 0.003107 & 0.005298 & 0.007081 & 76.85& 121.08  &161.57\\
   \hline
\end{tabular}
}
\caption{CDS options and corresponding Black volatilities (with spread $k^\star_0(a,b)=66.20$ pbs, $T_a=1$ and $T_b=3$) implied by the structural model using Monte Carlo simulation ($1.5\cdot 10^5$) paths and for various volatility parameter $\delta$ and different strikes (80\%, 100\%, 120\%)$k^\star_0(a,b)$ and $\sigma=$5\%.} 
\label{tab:CDSO}
\end{table}
Notice that in this example, we focus more on the numerical performances of the model and do not address the calibration problem. Hence, we use the model implied term structure given by the time-$0$ model survival probability curve as a CDS term structure. Calibration issues of the model to real market data will be investigated in a future work.

\section{Conclusion}
In this paper, a new structural model for credit risk has been proposed, generalizing earlier works. Our model deals with an incomplete information, where the default state and a noisy observation of the firm valued are accessible to the investor. It is therefore an extension of \cite{CocGemJea}, as the firm-value triggering the default is no longer restricted to be a continuous and invertible function of a Gaussian martingale, but can be any diffusion. %In practice indeed, the assets' value of the firm is not observable but the survival/default state can be monitored. This is the reason why we take as information flow the natural filtration of a noisy version of the firm value, enlarged with the default indicator. To the best of our knowledge, this is the first time that this information set is considered. %However, our framework cannot be handled analytically. Therefore, we propose a numerical method that relies on nonlinear filtering theory associated with recursive quantization.
%Our approach considered a model composed of a signal process  and an observation process which is supposed to be correlated to the signal process. Furthermore, we generalize the investors filtration by considering a larger and more realistic information flow composed of the observations of the observed process and the default filtration. from the investors point of view by adding the default filtration. 
%Hence, the model provides both an economic explanation of the default and the fact that it is a totally inaccessible stopping time. The latter is an interesting property observed as well in reduced-form models ensuring that investors will always price short-term defaultable assets. The survival probabilities are computed conditionally on incomplete information (including default) and the conditional hitting time under full information were estimated using the Brownian bridge method.\medskip

This more general framework benefits however from a limited analytical tractability. Therefore, we propose a numerical method that relies on nonlinear filtering theory associated with recursive quantization. Compared to earlier works such as \cite{CalSag} or \cite{ProSag14}, our numerical procedure is based on the fast quantization method recently introduced in \cite{PagSagMQ}, which avoids the use of %to fully compute the conditional survival probabilities associated to our assumed default time without relying on 
Monte Carlo simulations. A rigorous analysis of the global error induced to the estimation of the survival processes is performed. We analyze the shapes of the default probabilities which are characterized by a path-dependent feature keeping the memory of all the path of the observed process. Eventually we quantify the impact of the volatility of the noise impacting the firm-value process on the pricing of CDS options and the corresponding implied volatilities using a hybrid Monte Carlo-optimal quantization method.\medskip

In future research, we will first investigate the calibration issues of the model which can be tackled by either using observed prices or CDS quotes. In this case, our model can be easily extended to other works dealing with exact calibration to survival probabilities such as including a specific time-dependent barrier \cite{BrigTar04} or using time change techniques \cite{Mbaye2019}. Another possible research area is to deal with the price of general default sensitive securities. While we have derived a full quantization scheme to estimate the conditional default probabilities, this was not the case in the pricing of CDS option which required additional Monte Carlo simulations in order to be estimated. To derive a full quantization scheme for the pricing of defaultable claims, a possible route is to exploit the functional quantization method.

 \newpage
\bibliographystyle{plain}
\bibliography{NLfilteringbib}

\end{document}